\newtheorem{thm}{Theorem}
\newtheorem{cor}[thm]{Corollary}
\theoremstyle{definition}
\newtheorem{exa}{Example}
\newtheorem{defn}{Definition}
\title{ChemicHull: an online tool for determining extremal chemical graphs of maximum degree at most 3 for any degree-based topological indices}
\author{
    S\'ebastien Bonte\textsuperscript{1},
	Gauvain Devillez\textsuperscript{1},
	Valentin Dusollier\textsuperscript{1},
	\\
	Alain Hertz\textsuperscript{2},
	Hadrien M\'elot\textsuperscript{1},
	David Schindl\textsuperscript{3}	
	\\[3mm]
	\footnotesize \textsuperscript{1} Computer Science Department - Algorithms Lab\\[-2mm]
	\footnotesize University of Mons, Mons, Belgium\\[-2mm]
	\footnotesize Corresponding author: hadrien.melot@umons.ac.be\\[3mm]
	\footnotesize \textsuperscript{2} Department of Mathematics and Industrial
	Engineering\\[-2mm]
	\footnotesize Polytechnique Montr\'eal - Gerad, Montr\'eal, Canada\\[3mm]
	\footnotesize \textsuperscript{3} Haute Ecole de Gestion de Gen\`eve\\[-2mm]
	\footnotesize University of Applied Sciences Western Switzerland, Gen\`eve, Switzerland
}
\date{\today}
\newcommand{\G}[1]{%
\ifthenelse{\equal{#1}{1}}{\ifthenelse{\boolean{use_new_notation}}{F6}{G1}}{%
\ifthenelse{\equal{#1}{2}}{\ifthenelse{\boolean{use_new_notation}}{F1}{G2}}{%
\ifthenelse{\equal{#1}{3}}{\ifthenelse{\boolean{use_new_notation}}{F2}{G3}}{%
\ifthenelse{\equal{#1}{4}}{\ifthenelse{\boolean{use_new_notation}}{F12}{G4}}{%
\ifthenelse{\equal{#1}{5}}{\ifthenelse{\boolean{use_new_notation}}{F3}{G5}}{%
\ifthenelse{\equal{#1}{6}}{\ifthenelse{\boolean{use_new_notation}}{F13}{G6}}{%
\ifthenelse{\equal{#1}{7}}{\ifthenelse{\boolean{use_new_notation}}{F15}{G7}}{%
\ifthenelse{\equal{#1}{7bis}}{\ifthenelse{\boolean{use_new_notation}}{F16}{G7bis}}{%
\ifthenelse{\equal{#1}{8}}{\ifthenelse{\boolean{use_new_notation}}{F5}{G8}}{%
\ifthenelse{\equal{#1}{9}}{\ifthenelse{\boolean{use_new_notation}}{F4}{G9}}{%
\ifthenelse{\equal{#1}{10}}{\ifthenelse{\boolean{use_new_notation}}{F11}{G10}}{%
\ifthenelse{\equal{#1}{11}}{\ifthenelse{\boolean{use_new_notation}}{F7}{G11}}{%
\ifthenelse{\equal{#1}{12}}{\ifthenelse{\boolean{use_new_notation}}{F10}{G12}}{%
\ifthenelse{\equal{#1}{13}}{\ifthenelse{\boolean{use_new_notation}}{F9}{G13}}{%
\ifthenelse{\equal{#1}{14}}{\ifthenelse{\boolean{use_new_notation}}{F8}{G14}}{%
\ifthenelse{\equal{#1}{15}}{\ifthenelse{\boolean{use_new_notation}}{F20}{G15}}{%
\ifthenelse{\equal{#1}{16}}{\ifthenelse{\boolean{use_new_notation}}{F21}{G16}}{%
\ifthenelse{\equal{#1}{17}}{\ifthenelse{\boolean{use_new_notation}}{F17}{G17}}{%
\ifthenelse{\equal{#1}{18}}{\ifthenelse{\boolean{use_new_notation}}{F14}{G18}}{%
\ifthenelse{\equal{#1}{19}}{\ifthenelse{\boolean{use_new_notation}}{F18}{G19}}{%
\ifthenelse{\equal{#1}{20}}{\ifthenelse{\boolean{use_new_notation}}{F19}{G20}}{%
\ifthenelse{\equal{#1}{21}}{\ifthenelse{\boolean{use_new_notation}}{NOTYETDEFINED}{G21}}{%
\textbf{G?}%
}}}}}}}}}}}}}}}}}}}}}}}
\newcommand{\V}[1]{%
\ifthenelse{\equal{#1}{1}}{\ifthenelse{\boolean{use_new_notation}}{\textup{V1}}{\textup{V1}}}{%
\ifthenelse{\equal{#1}{2}}{\ifthenelse{\boolean{use_new_notation}}{\textup{V2}}{\textup{V2}}}{%
\ifthenelse{\equal{#1}{3}}{\ifthenelse{\boolean{use_new_notation}}{\textup{V3}}{\textup{V3}}}{%
\ifthenelse{\equal{#1}{6}}{\ifthenelse{\boolean{use_new_notation}}{\textup{V4}}{\textup{V6}}}{%
\ifthenelse{\equal{#1}{7a}}{\ifthenelse{\boolean{use_new_notation}}{\textup{V5}}{\textup{V7a}}}{%
\ifthenelse{\equal{#1}{7b}}{\ifthenelse{\boolean{use_new_notation}}{\textup{V6}}{\textup{V7b}}}{%
\ifthenelse{\equal{#1}{7c}}{\ifthenelse{\boolean{use_new_notation}}{\textup{V7}}{\textup{V7c}}}{%
\ifthenelse{\equal{#1}{8a}}{\ifthenelse{\boolean{use_new_notation}}{\textup{V8SUPP}}{\textup{V8a}}}{%
\ifthenelse{\equal{#1}{8b}}{\ifthenelse{\boolean{use_new_notation}}{\textup{V9SUPP}}{\textup{V8b}}}{%
\ifthenelse{\equal{#1}{8c}}{\ifthenelse{\boolean{use_new_notation}}{\textup{V8}}{\textup{V8c}}}{%
\ifthenelse{\equal{#1}{8d}}{\ifthenelse{\boolean{use_new_notation}}{\textup{V9}}{\textup{V8d}}}{%
\ifthenelse{\equal{#1}{9a}}{\ifthenelse{\boolean{use_new_notation}}{\textup{V10}}{\textup{V9a}}}{%
\ifthenelse{\equal{#1}{9b}}{\ifthenelse{\boolean{use_new_notation}}{\textup{V11}}{\textup{V9b}}}{%
\ifthenelse{\equal{#1}{9c}}{\ifthenelse{\boolean{use_new_notation}}{\textup{V12}}{\textup{V9c}}}{%
\ifthenelse{\equal{#1}{10a}}{\ifthenelse{\boolean{use_new_notation}}{\textup{V13}}{\textup{V10a}}}{%
\ifthenelse{\equal{#1}{10b}}{\ifthenelse{\boolean{use_new_notation}}{\textup{V14}}{\textup{V10b}}}{%
\ifthenelse{\equal{#1}{10c}}{\ifthenelse{\boolean{use_new_notation}}{\textup{V15}}{\textup{V10c}}}{%
\ifthenelse{\equal{#1}{11a}}{\ifthenelse{\boolean{use_new_notation}}{\textup{V16}}{\textup{V11a}}}{%
\ifthenelse{\equal{#1}{11b}}{\ifthenelse{\boolean{use_new_notation}}{\textup{V17}}{\textup{V11b}}}{%
\ifthenelse{\equal{#1}{11c}}{\ifthenelse{\boolean{use_new_notation}}{\textup{V18}}{\textup{V11c}}}{%
\ifthenelse{\equal{#1}{12a}}{\ifthenelse{\boolean{use_new_notation}}{\textup{V19}}{\textup{V12a}}}{%
\ifthenelse{\equal{#1}{12b}}{\ifthenelse{\boolean{use_new_notation}}{\textup{V20}}{\textup{V12b}}}{%
\ifthenelse{\equal{#1}{12c}}{\ifthenelse{\boolean{use_new_notation}}{\textup{V21}}{\textup{V12c}}}{%
\textbf{V?}%
}}}}}}}}}}}}}}}}}}}}}}}}
\newcommand{\T}[1]{\ensuremath{{\sf T}_{#1}}}
\newcommand{\Path}[1]{\ensuremath{{\sf P}_{#1}}}
\newcommand{\C}[1]{\ensuremath{{\sf C}_{#1}}}
\newcommand{\R}[1]{\mathbb{R}^{#1}}
\newcommand{\conv}{\mathrm{conv}}
\newcommand{\Gc}[1]{\mathcal{G}_{#1}}
\begin{document}

\maketitle

\hrule

\vspace*{0.3cm}

\small
\noindent
\textbf{Abstract.} 

Topological indices are graph-theoretic descriptors that play a crucial role in mathematical chemistry, capturing the structural characteristics of molecules and enabling the prediction of their physicochemical properties.
A widely studied category of topological indices, known as \textit{degree-based topological indices}, are calculated as the sum of the weights of a graph’s edges, where each edge weight is determined by a formula that depends solely on the degrees of its endpoints.

This work focuses exclusively on chemical graphs in which no vertex has a degree greater than 
$3$, a model for conjugated systems. Within a polyhedral framework, each chemical graph is mapped to a point in a three-dimensional space, 
enabling extremal values of any degree-based topological index to be determined through linear optimization over the corresponding polyhedron. Analysis within this framework reveals that extremality is limited to a small subset of chemical graph families, implying that certain chemical graphs can never attain extremality for any degree-based topological index.

The main objective of this paper is to present \emph{ChemicHull}, an online tool we have developed to determine and display extremal chemical graphs for arbitrary degree-based topological indices. To illustrate the power of this tool, we easily recover established results, emphasizing its effectiveness for chemically significant graph classes such as chemical trees and unicyclic chemical graphs. This tool also enabled the identification of a counterexample to a previously published extremal result concerning the Randić index.

\vspace*{0.2cm}
\noindent
\emph{Keywords:} ChemicHull, chemical graphs, degree-based topological index, extremal graphs.

\vspace*{0.2cm}
\hrule

\normalsize

\section{Introduction} \label{sec:intro}

Graph-theoretic descriptors, referred to as topological indices, are useful in mathematical chemistry for capturing molecular structures and predicting their physicochemical behavior~\cite{Codding98, Devillers00, Devillers96, Hansch95, Karelson00, Kier76, Kier86}. As the name implies, these indices depend solely on the structure of the underlying chemical graph, in which vertices correspond to atoms and edges correspond to bonds. 
A key class of these topological indices is the \textit{degree-based topological indices}, defined as the sum of edge weights, with each weight depending solely on the degrees of the edge’s endpoints.
A prominent example is the Randi\'c  index~\cite{Randic1975}, first introduced in 1975. As a reminder, the Randi\'c  index of a graph $G=(V,E)$, denoted $R(G)$, is defined as 
$$
R(G) = \sum_{{uv}\in E} \frac{1}{\sqrt{d(u) d(v)}},
$$
where $d(u)$ denotes the degree of  vertex $u\in V$. 
Over the years, this index has inspired the development of numerous related descriptors. Investigating the extremal values of these indices (i.e., identifying their possible maxima and minima) has generated extensive literature, both from a theoretical standpoint and in connection with practical chemical applications~\cite{das2011abc,zhang2016abc,AFRG22,henning2007albertson,hansen2005albertson,cui2021ag,carballosa2022ag,VPVFS21,che2016forgotten,yan2010ga,deng2018ga,zhong2012harmonic, DENG2013,elumalai2018, das2016,sedlar2015, falahati2017,ali2023maximum,cruz2021sombor,li2022extremal,cruz2021extremal,liu2021reduced,deng2021molecular,ali2019extremal,gutman2004first,nikolic2003zagreb,furtula2010augmented, ali2021augmented,li2008survey, swartz2022survey, hansen2009variable,GFE14,liu2020some, chu2020extremal,raza2020bounds,ali2020sddi,ghorbani2021sddi,hertz2025}.

As pointed out by Patrick Fowler~\cite{PF} and discussed in~\cite{article33},
two distinct definitions of chemical graphs are found in the literature. According to one, they represent the skeletal structures of saturated hydrocarbons, leading to graphs of maximum degree at most 4. 
The other considers unsaturated conjugated systems such as alkenes, polyenes, benzenoids~\cite{kwun_m-polynomials_2018}, and fullerenes~\cite{shigehalli_computation_2016, sharma_degree-based_2022} in which the maximum degree does not exceed three.
Throughout this work, we follow the latter definition and therefore restrict our attention to chemical graphs with a maximum degree of at most three.

Our approach relies on a polyhedral representation of chemical graphs. Specifically, we associate each chemical graph with a point in a three-dimensional space defined by the numbers of edges classified according to the degrees of their endpoints. A degree-based topological index then corresponds to a linear function over this space. This polyhedral framework leads to two major consequences. First, the extremal values of any degree-based topological index can be determined by optimizing a linear function over the associated polyhedron, a straightforward task once the polyhedral representation is known. Second, the structure of the polyhedron reveals that only a small number of families of chemical graphs can be extremal, regardless of the chosen index. This observation provides a unified explanation for the recurrent appearance of the same few families as extremal graphs across different indices and implies that certain chemical graphs can never be extremal for any degree-based topological index.

The structure of the paper is as follows. In Section~\ref{sec:basic}, we outline the relevant background from chemical graph theory  and geometry, and introduce the proposed polyhedral methodology. Section~\ref{sec:extr_points} characterizes all families of extremal chemical graphs, irrespective of the particular degree-based topological index under study. In Section~\ref{sec:chemichull}, we introduce \textit{ChemicHull}~\cite{chemichull}, an online tool that enables interactive exploration of extremal values of degree-based topological indices. Section~\ref{sec:application} presents examples demonstrating how our approach can be used to recover known results from the literature in a simple and systematic way. Finally, Section~\ref{sec:futurework} concludes the paper with a summary and a discussion of directions for future research.

Throughout the paper, we have taken particular care to ensure that the exposition is both accessible and self-contained. Numerous examples and reminders of key concepts are provided, with special attention given to graph classes of central importance in chemical graph theory, such as trees and unicyclic graphs. Our goal is to make the proposed framework transparent and readily applicable to researchers studying extremal problems on degree-based topological indices.

\section{Basics and notations}\label{sec:basic}

Let $G = (V, E)$ be a simple undirected graph of {\em order} $n = |V|$ and {\em size} $m = |E|$. The \emph{degree} $d(v)$ of a vertex $v \in V$ is the number of vertices adjacent to $v$. The maximum degree of $G$ is denoted by $\Delta(G)$. In what follows, $\Path{n}$ and $\C{n}$ denote the path of order $n$ and the cycle of order $n$, respectively. Also, $\T{n}$ is the tricyclic graph obtained by adding two adjacent vertices linked to the endpoints of a path $\Path{n-2}$. For illustration, $\Path{5}$, $\C{6}$ and $\ensuremath{{\sf T}}_7$ are shown in Figure \ref{fig:PCT}.
\begin{figure}[!htb]
	\centering    \includegraphics[scale=1.0]{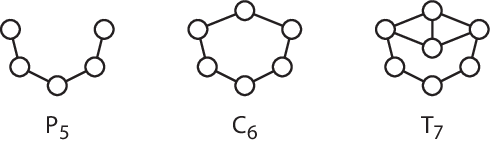}
	\vspace{-0.2cm}\caption{Examples of graphs $\Path{n}$, $\C{n}$ and $\ensuremath{{\sf T}}_n$.}
	\label{fig:PCT}
\end{figure}

\subsection{Chemical graphs}

A chemical graph is a simple connected  undirected graph in which vertices represent atoms and edges represent the chemical bonds between them. The set of chemical graphs of order $n$, size $m$, and maximum degree at most $\Delta$ is denoted by $\Gc{\Delta}(n,m)$. For instance, $\Gc{4}(n,n-1)$ represents the set of chemical trees with maximum degree not exceeding 4.
Throughout this work, we restrict our attention to chemical graphs with maximum degree at most 3. Accordingly, unless explicitly noted, any chemical graph of order 
$n$ and size 
$m$ will be understood to belong to $\Gc{3}(n,m)$. 

For a graph $G$ and  integers $i,j$ with $1\le i\le j$, an $ij$-edge in $G$ is an edge whose endpoints have degree $i$ and $j$. We denote by $m_{ij}$ the number of $ij$-edges in $G$, and by $n_i$ the number of vertices of degree $i$ in $G$. 
The only chemical graphs of order 1 and 2 are the paths 
$\Path{1}$ and $\Path{2}$, respectively.
Accordingly, to exclude trivial instances, we restrict our attention to chemical graphs with a minimum of three vertices, which gives
\begin{align} 2 \le n - 1 \le m \le \min\left\{\left\lfloor \frac{3n}{2} \right\rfloor, \frac{n(n-1)}{2}\right\}.\label{val_n_m}
\end{align}

Since chemical graphs are connected, Inequalities (\ref{val_n_m}) imply $m_{11}=0$. Hence, a chemical graph can have at most 5 nonzero $m_{ij}$ values, namely $m_{12}, m_{13}, m_{22}, m_{23}$ and $m_{33}$. One can readily verify that
\begin{align}
 \label{eq_n1}  n_1 & = m_{12} + m_{13}\\[-3pt]
 \label{eq_n2}  n_2 & = \frac{m_{12} + 2 m_{22} + m_{23} }{2}\\[-3pt] 
 \label{eq_n3}  n_3 & = \frac{m_{13} + m_{23} + 2 m_{33} }{3}\\[-3pt]
\label{n}n&=n_1+n_2+n_3=\frac{3}{2} m_{12} + \frac{4}{3} m_{13} + m_{22} + \frac{5}{6} m_{23} + \frac{2}{3} m_{33}\\[-3pt]
\label{m}m&=m_{12}+m_{13}+m_{22}+m_{23}+m_{33}.
\end{align}

As emphasized in the introduction, our goal is to study the extremal properties of degree-based topological indices of chemical graphs. These indices are computed as the sum of edge weights, where the weight of an edge 
$vw$ is determined by a formula involving the degrees of 
$v$ and $w$.

\begin{defn}\label{def:degreetopo}
A \emph{degree-based topological index} $I$ for a chemical graph $G \in \Gc{3}(n, m)$ with $n \ge 3$ is a function of the form
\vspace{-0.3cm}$$I(G)  = c_{12}m_{12}+c_{13}m_{13}+c_{22}m_{22}+c_{23}m_{23}+c_{33}m_{33},$$
where every $c_{ij}$ is a real number, and $m_{ij}$ is the number of $ij$-edges in $G$.
\end{defn}

For example, the Randi\'c index (see Section~\ref{sec:intro}) is the degree-based topological index with $c_{ij} = \frac{1}{\sqrt{i j}}$. 
Note that given $n$, $m$ and three of the five $m_{ij}$ values, the remaining two can be determined. For example, if $m_{12}$, $m_{13}$ and $m_{33}$ are known, it follows from Equations (\ref{n}) and (\ref{m}) that
\begin{align}
\label{get_x22}
m_{22} &= 6n - 5m - 4 m_{12} - 3 m_{13} + m_{33}\\[-3pt]
\label{get_x23}
m_{23} &= 6m - 6n + 3 m_{12} + 2 m_{13} - 2 m_{33}.
\end{align}

In this paper, for a pair $(n,m)$ satisfying~\eqref{val_n_m}, we consider tuples $(m_{12}, m_{13}, m_{33})$ for which at least one corresponding chemical graph exists.

\begin{defn} \label{def:realp}
	A point $(m_{12}, m_{13}, m_{33})$ is said to be \emph{realizable} for a pair $(n, m)$ if there exists a graph $G \in \Gc{3}(n, m)$ of order $n$ and size $m$ containing $m_{12}$ 12-edges, $m_{13}$ 13-edges and $m_{33}$ 33-edges.
\end{defn}

\subsection{Basic  notions in geometry}
\label{sec:geometry}

To study the extremal properties of chemical graphs of order 
$n$ and size 
$m$, we consider the convex hull of all realizable points for a given pair 
$(n,m)$. We briefly recall the definition of a convex hull along with two equivalent representations.

\begin{defn}
Let $S = \{x_1, \dots, x_p\}$ be a finite set of $p$ points in $\mathbb{R}^k$. The \emph{convex hull} of $S$, denoted $\conv(S)$, is the set of all convex combinations of the points in $S$; explicitly, 
\[
  \conv(S) = \left\{ \sum_{i=1}^p \lambda_i x_i \;\middle|\; 
  \lambda_i \geq 0,\; \sum_{i=1}^p \lambda_i = 1 \right\}.
 \]

An \emph{extreme point} of a polyhedron is a point that cannot be expressed as the inner point of any line segment within the polyhedron. Since $S$ is finite, its convex hull forms a \emph{polytope} $P$ which can be expressed in the following two equivalent ways:
\begin{enumerate}
  \item The \emph{V-representation} of $P$ is the subset $S'\subseteq S$ consisting of its extreme points,  meaning that $P$ is the convex hull of $S'$; that is, $\mathrm{conv}(S) =P=
  \mathrm{conv}(S')$.
  \item 
  The \emph{H-representation} of $P$ is  the set of linear inequalities corresponding to the intersection of halfspaces. Explicitely,
  $
  \mathrm{conv}(S) =P= \{ x \in \R{k} \mid Ax \leq b \}$
  for some matrix $A$ and vector $b$.  
  This system is \emph{minimal} in the sense that removing any inequality strictly enlarges the corresponding region in $\R{k}$.
\end{enumerate}
\end{defn}

Thus, the polytope $P=\mathrm{conv}(S)$ can be seen either as the smallest convex set containing its points or as the intersection of all half-spaces that include 
$S$. A \emph{facet-defining inequality} specifies a \emph{facet} of 
$P$, i.e., a face of maximal dimension 
$k-1$. Some valid inequalities, however, may be redundant: they hold for all points of 
$P$ without defining new facets. Facet-defining inequalities thus provide a minimal and exact description of the polytope via its bounding hyperplanes

\begin{exa} \label{exa:convex}
Let $S = \{(0, 0), (1, 0), (1, 1), (1, 2), (2, 0), (2, 1), (3, 0), (3, 1)\}$ be a set of points in $\R{2}$. The gray polytope in Figure~\ref{fig:conv} corresponds to $\conv(S)$, its V-representation is \{(0, 0), (1, 2), (3, 0), (3, 1)\} and its H-representation consists of the following four inequalities:
\begin{align}
x_1 & \le  3,\\
- x_2 & \le 0,\\
x_1 + 2 x_2 & \le 5,\\
x_1 - 2 x_2 & \le 0.
\end{align}

The inequality $-x_1 + 2 x_2 \le 4$ is valid  for all points of $S$ but is redundant (the inequality defines a half-plane lying below the dotted line in Figure~\ref{fig:conv}).
\end{exa}

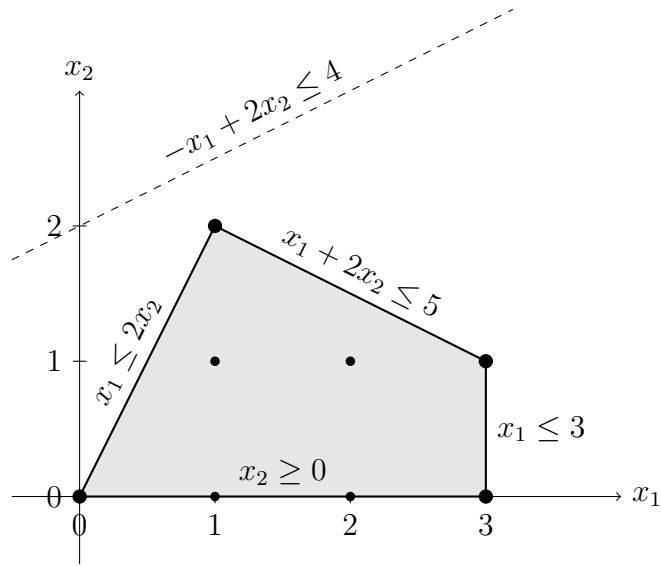
\begin{figure}[h!]
\centering

\begin{tikzpicture}[scale=1.8]

\draw[->] (-0.5,0) -- (4,0) node[right] {$x_1$};
\draw[->] (0,-0.5) -- (0,3) node[above] {$x_2$};

\foreach \x in {0,1,2,3} \draw (\x,0.05) -- (\x,-0.05) node[below] {\x};
\foreach \y in {0,1,2} \draw (0.05,\y) -- (-0.05,\y) node[left] {\y};

\coordinate (A) at (0,0);
\coordinate (B) at (3,0);
\coordinate (C) at (3,1);
\coordinate (D) at (1,2);

\fill[gray!20] (A) -- (B) -- (C) -- (D) -- cycle;

\draw[thick] (A) -- (B) node[midway, above] {$x_2 \ge 0$};
\draw[thick] (B) -- (C) node[midway, right] {$x_1 \le 3$};
\draw[thick] (C) -- (D) node[midway, sloped, above] {$x_1 + 2 x_2 \le 5$};
\draw[thick] (D) -- (A) node[midway, sloped, above] {$x_1 \le 2 x_2 $ };

\draw[dashed, -] (-0.5,1.75) -- (3.2,3.6) node[midway, sloped, above]{$-x_1 + 2 x_2 \le 4$};

\foreach \v in {A,B,C,D} \fill (\v) circle (1.5pt);

\fill[black] (1,1) circle (1pt);
\fill[black] (1,2) circle (1pt);
\fill[black] (2,1) circle (1pt);
\fill[black] (1,0) circle (1pt);
\fill[black] (2,0) circle (1pt);

\end{tikzpicture}

\caption{A 2D polytope illustrating facet-defining inequalities and a valid (but redundant) inequality (dashed line).}\label{fig:conv}
\end{figure}

Let 
$S$ denote the set of all feasible solutions of an integer optimization problem. A point is an \emph{interior point} of the polyhedron $\conv(S)$ if  it strictly satisfies every inequality in the H-representation
of $\conv(S)$. Standard results from integer programming (see, for example \cite{Wolsey2020}) show that a linear objective function never attains its optimum over $S$ at an interior point of $\conv(S)$. Instead, its optimum over 
$S$ is achieved at least at one extreme point of 
$\conv(S)$. Hence, knowing the V-representation of 
$\conv(S)$ is sufficient to identify at least one optimal solution.
 
 As an illustration, in Example \ref{exa:convex}, if 
$f(x_1, x_2) = x_2$, the maximum occurs only at $(1, 2)$, while the minimum is reached at all points on the facet $x_2 \ge 0$
, namely $\{(0, 0), (1, 0), (2, 0), (3, 0)\}$ with $(0, 0)$ and $(3, 0)$ being extreme points. The points 
$(1,1)$ and $(2,1)$
 are interior to 
$\conv(S)$ and do not yield an optimum over 
$S$ for any linear objective function.

\subsection{Polytopes of chemical graphs}

Definition \ref{def:realp} maps each chemical graph in $\Gc{3}(n,m)$ to a point in 
$\R{3}$. We now define the polytope  induced by all points in 
$\R{3}$
 that are realizable for a given pair 
$(n,m)$.
\begin{defn}\label{def:polytope}
	Let $n$ and $m$ be two integers satisfying~\eqref{val_n_m}. The associated polytope $\mathcal{P}_{n, m}$ is the convex hull of all points $(m_{12},m_{13},m_{33})$ that are realizable  for $(n,m)$.  
\end{defn}

According to this definition, we consider a \emph{full-dimensional} polytope $\mathcal{P}_{n, m}$ to have dimension 3. Note that a single point in $\mathcal{P}_{n, m}$ may correspond to multiple chemical graphs. For instance, the point $(0,4,4)$ in $\mathcal{P}_{8,8}$ corresponds, by Definitions~\ref{def:realp} and \ref{def:polytope} and Equations~\eqref{get_x22} and~\eqref{get_x23}, to all chemical graphs of order $n = 8$, size $m=8$, and with $m_{12}=0$, $m_{13}=4$, $m_{22}=m_{23}=0$ and $m_{33}=4$. There are two non-isomorphic chemical graphs\footnote{The list of non-isomorphic graphs with these values is exhaustive, as computed with the help of the discovery system PHOEG~\cite{phoeg}.} corresponding to these values, shown in Figure~\ref{fig:twographs}. As another example, for any even $n$, the point $(0,0,m=\frac{3n}{2})$
corresponds to all cubic graphs of order $n$.

\begin{figure}[!htb]

\centering\includegraphics[scale=1.0]{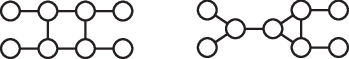}
	\caption{Two non-isomorphic chemical graphs with coordinates $(0,4,4)$ in $\mathcal{P}_{8,8}$.}
	\label{fig:twographs}
\end{figure}

Although a given point may correspond to more than one chemical graph, the knowledge of 
$\mathcal{P}_{n,m}$ is sufficient to bound the invariants that can be expressed as linear functions of $m_{12}$, $m_{13}$ and $m_{33}$. As noted earlier, it is enough to focus on the vertices or the facets of the polytope. This applies in particular to degree-based topological indices. Indeed, for a degree-based topological index 
$I$ defined by coefficients 
$c_{ij}$, we have, by Definition~\ref{def:degreetopo} and Equations~\eqref{get_x22} and~\eqref{get_x23}:
\begin{align*}
    I(G)=&c_{12}m_{12}+c_{13}m_{13}+c_{22}m_{22}+c_{23}m_{23}+c_{33}m_{33}\\
    =&c_{12}m_{12}+c_{13}m_{13}+c_{22}(6n - 5m - 4 m_{12} - 3 m_{13} + m_{33})\\
    &+c_{23}(6m - 6n + 3 m_{12} + 2 m_{13} - 2 m_{33}) + c_{33} m_{33}\\
    =&(c_{12}-4c_{22}+3c_{23})m_{12}+(c_{13}-3c_{22}+2c_{23})m_{13}+(c_{22}-2c_{23}+c_{33})m_{33}\\
    &+(6n-5m)c_{22}+(6m-6n)c_{23}.
\end{align*}

\begin{defn} 
Let $I$ be a degree-based topological index defined by $c_{ij}$ values. We define
\begin{itemize}[nosep]
\item $\widehat{I}(m_{12}, m_{13}, m_{33}) = c'_{12} m_{12} + c'_{13} m_{13} + c'_{33} m_{33}
$, where
\begin{itemize}\setlength\itemsep{1pt}
\item[$\circ$] $c'_{12} = c_{12}-4c_{22}+3c_{23}$, 
\item[$\circ$] $
c'_{13} = c_{13}-3c_{22}+2c_{23}$,
\item[$\circ$] $
c'_{33} = c_{22}-2c_{23}+c_{33},$
\end{itemize}
\item $
C_{I}(n,m) = (6n-5m)c_{22}+(6m-6n)c_{23}.
$
\end{itemize}
\end{defn}
Note that for a chemical graph $G$ of order $n$, size $m$, and with $m_{ij}$ $ij$-edges, we have $I(G) = \widehat{I}(m_{12}, m_{13}, m_{33}) + C_{I}(n,m)$.
Hence, maximizing or minimizing 
$I(G)$ over all graphs of order 
$n$ and size 
$m$ is equivalent to maximizing or minimizing the linear function 
$\widehat{I}(m_{12}, m_{13}, m_{33})$. The optimal value of 
$I(G)$ is then obtained by adding 
$C_I(n, m)$ to the previously determined optimum.

\begin{exa} \label{exa:maxRandic}
Maximizing or minimizing the Randi\'c index 
$R$ for chemical graphs of order 
$n$ and size 
$m$ reduces to optimizing the linear function
\begin{align*}
\widehat{R}(m_{12}, m_{13}, m_{33}) & = \frac{\sqrt{2} + \sqrt{6} - 4}{2}m_{12} + \frac{2 \sqrt{3} + 2 \sqrt{6} - 9}{6}m_{13} + \frac{5 - 2 \sqrt{6}}{6}m_{33}\\
& \simeq -0.068 m_{12} - 0.106 m_{13} + 0.016 m_{33}.
\end{align*}
Once the optimal values   $m^*_{12}, m^*_{13}$ and $m^*_{33}$ are determined, one can simply add 
\begin{align*}
C_R(n, m) &= \frac{6n-5m}{2} + \frac{6m-6n}{\sqrt{6}},\\
&= (3 - \sqrt{6}) n + \left(\sqrt{6} - \frac{5}{2}\right) m,\\
&\simeq 0.55 n - 0.05 m,
\end{align*}
to $\widehat{R}(m^*_{12}, m^*_{13}, m^*_{33})$ to get the maximum (resp. minimum) value of $R$.
\end{exa}

By abuse of notation, we denote by 
$I(p)$ the value of 
$I(G)$ for any graph 
$G$ associated with a realizable point $
p=(m_{12},m_{13},m_{33})$.

As indicated in Section~\ref{sec:basic}, since $\widehat{I}$ is linear, it is sufficient to know the V-representation of $\mathcal{P}_{n, m}$ in order to maximize or minimize $\widehat{I}$. That is what is done in the following section.

\section{Extreme points} \label{sec:extr_points}

Our recent paper ~\cite{delta3} provides a complete description of all possible polytopes 
$\mathcal{P}_{n, m}$ for any pair 
$(n,m)$ satisfying \eqref{val_n_m}. In this section, we present the principal results of that study from the viewpoint of their applicability to chemistry, particularly for researchers interested in determining the chemical graphs that maximize or minimize a given degree-based topological index. Accordingly, we concentrate on the identification of the extreme points of these polytopes for a fixed pair $(n,m)$.
Remarkably, the number of such points remains very small, even when $n$ and $m$ are arbitrarily large, never exceeding 16. 

Let's first focus on the polytopes $\mathcal{P}_{n, m}$ corresponding to the pairs $(n,m)$ that satisfy the following condition:
\begin{align}\max\{12,n-1\}\le m\leq\left\lfloor\frac{3n-3}{2} \right\rfloor. \label{m>12}\end{align} As proved in~\cite{delta3}, for any such polytope $\mathcal{P}_{n, m}$, its set of extreme points is a subset of the 21 points whose coordinates are given in Table~\ref{tab:vi}. Table~\ref{tab:extr_points1} lists the extreme points of these polytopes all of which are full-dimensional. 
Note that, in all cases shown in Table~\ref{tab:extr_points1}, the extreme points are identical for both even and odd values of 
$n$; the only exception is the point 
$\V{8d}$, which appears exclusively when 
$n$ is odd. 

The extreme points of the remaining polytopes that do not satisfy condition~\eqref{m>12} (i.e., the polytopes for chemical graphs with at least two but fewer than twelve edges, or with more than $\left\lfloor\frac{3n-3}{2} \right\rfloor$ edges) are given in Table~\ref{tab:extr_points2}. 

\begin{table}[h!]
\renewcommand{\arraystretch}{1.22}	\centering
	\caption{Twenty-one  points $(m_{12},m_{13},m_{33})$}\label{tab:vi}~\\
	\begin{tabular}{|c|c|c|c|}\hline		
		Id&$m_{12}$&$m_{13}$&$m_{33}$\\\hline	
		\V{1} &$0$&$0$&$0$\\	
		\V{2}&$2$&$0$&$0$\\
		\V{3}&$0$&$0$&$1$ \\
		\V{6}&$0$&$0$&$5m - 6n$ \\
		\V{7a}&$\frac{ 6n -5m - 3((m-2n) \bmod 4)}{4}$&$(m-2n) \bmod 4$&$ 0$\\
		\V{7b}&$\frac{6n-5m + (m-2n) \bmod 4}{4}$&$0$&$(m-2n) \bmod 4$\\
		\V{7c}&$\frac{6n-5m - (2n-m) \bmod 4}{4}$&$0$&$0$\\
		\V{8c}&$0$&$\frac{ 3n-2m - n \bmod 2}{2}$&$\frac{4m - 3n - 3(n \bmod 2)}{2}$\\
		\V{8d}&$1$&$\frac{3n-2m - 3}{2}$&$\frac{4m - 3n - 1}{2}$\\
		\V{9a}&$3m - 3n - 2$&$3m - 3n - 2$&$6m - 6n - 1$\\
		\V{9b}&$3m - 3n - 1$&$0$&$6m - 6n - 1$\\
		\V{9c}&$0$&$3m - 3n - 2$&$6m - 6n - 3$\\
		\V{10a}&$0$&$\frac{6n-5m - m \bmod 3}{3}$&$0$ \\
		\V{10b}&$m \bmod 3$&$\frac{6n-5m - 4(m \bmod 3)}{3}$&$0$ \\
		\V{10c}&$0$&$\frac{6n-5m + (2m) \bmod 3}{3}$&$(2m) \bmod 3$\\
		\V{11a}&$\frac{3n-2m - m \bmod 3}{3}$&$0$&$\frac{7m - 6n - 4(m \bmod 3)}{3}$\\
		\V{11b}&$\frac{3n-2m - 2((2m) \bmod 3)}{3}$&$(2m) \bmod 3$&$\frac{7m - 6n + (2m) \bmod 3}{3}$\\
		\V{11c}&$\frac{ 3n-2m - m \bmod 3}{3}$&$0$&$\frac{7m - 6n - m \bmod 3}{3}$\\
		\V{12a}&$0$&$3n-3m + 1$&$0$\\
		\V{12b}&$0$&$0$&$3m - 3n - 1$\\
		\V{12c}&$1$&$0$&$3m - 3n + 1$\\\hline
	\end{tabular}
\end{table}

\begin{table}
	\scriptsize	\centering
	\caption{Extreme points for $\mathcal{P}_{n, m}$ when the pair $(n,m)$ satisfies condition~\eqref{m>12}}\label{tab:extr_points1}~\\
	\setlength{\tabcolsep}{2pt}
	\begin{tabular}{ccccccccccccccccccccc|cc}
\multicolumn{23}{c}{$m=n -1$}\\\cline{1-22}
\V{1}&\V{2}&\V{3}&\V{6}&\V{7a}&\V{7b}&\V{7c}&\V{8c}&\V{8d}&\V{9a}&\V{9b}&\V{9c}&\V{10a}&\V{10b}&\V{10c}&\V{11a}&\V{11b}&\V{11c}&\V{12a}&\V{12b}&\V{12c}&$n\bmod 2$&\\
&x&&&x&x&x&x&&&&&x&x&x&x&x&x&x&&&0&\\
&x&&&x&x&x&x&x&&&&x&x&x&x&x&x&x&&&1&\\\cline{1-22}
\multicolumn{23}{c}{\vspace{-2pt}}\\
\multicolumn{23}{c}{$m=n$}\\\cline{1-22}
\V{1}&\V{2}&\V{3}&\V{6}&\V{7a}&\V{7b}&\V{7c}&\V{8c}&\V{8d}&\V{9a}&\V{9b}&\V{9c}&\V{10a}&\V{10b}&\V{10c}&\V{11a}&\V{11b}&\V{11c}&\V{12a}&\V{12b}&\V{12c}&$n\bmod 2$&\\
x&&&&x&x&x&x&&&&&x&x&x&x&x&x&&&&0&\\
x&&&&x&x&x&x&x&&&&x&x&x&x&x&x&&&&1&\\\cline{1-22}

\multicolumn{23}{c}{\vspace{-2pt}}\\
\multicolumn{23}{c}{$m=n+1$}\\\cline{1-22}
\V{1}&\V{2}&\V{3}&\V{6}&\V{7a}&\V{7b}&\V{7c}&\V{8c}&\V{8d}&\V{9a}&\V{9b}&\V{9c}&\V{10a}&\V{10b}&\V{10c}&\V{11a}&\V{11b}&\V{11c}&\V{12a}&\V{12b}&\V{12c}&$n\bmod 2$&\\
x&&x&&x&x&x&x&&x&x&x&x&x&x&x&x&x&&&&0&\\
x&&x&&x&x&x&x&x&x&x&x&x&x&x&x&x&x&&&&1&\\\cline{1-22}

\multicolumn{23}{c}{\vspace{-2pt}}\\
\multicolumn{23}{c}{$n + 1<m<\frac{6n}{5}$}\\
\V{1}&\V{2}&\V{3}&\V{6}&\V{7a}&\V{7b}&\V{7c}&\V{8c}&\V{8d}&\V{9a}&\V{9b}&\V{9c}&\V{10a}&\V{10b}&\V{10c}&\V{11a}&\V{11b}&\V{11c}&\V{12a}&\V{12b}&\V{12c}&$n\bmod 2$&$\;6n{-}5m{\in}\{1,2,5\}$ \\ \hline
x&&&&x&x&x&x&&&&&x&x&x&x&x&x&&x&x&0&\\
x&&&&&x&x&x&&&&&x&&x&x&x&x&&x&x&0&\checkmark\\
x&&&&x&x&x&x&x&&&&x&x&x&x&x&x&&x&x&1&\\
x&&&&&x&x&x&x&&&&x&&x&x&x&x&&x&x&1&\checkmark\\
\hline
\multicolumn{23}{c}{\vspace{-2pt}}\\
\multicolumn{23}{c}{$\frac{6n}{5} \le m\leq \left\lfloor\frac{3n-3}{2} \right\rfloor$}\\\cline{1-22}
\V{1}&\V{2}&\V{3}&\V{6}&\V{7a}&\V{7b}&\V{7c}&\V{8c}&\V{8d}&\V{9a}&\V{9b}&\V{9c}&\V{10a}&\V{10b}&\V{10c}&\V{11a}&\V{11b}&\V{11c}&\V{12a}&\V{12b}&\V{12c}&$n\bmod 2$& \\ \cline{1-22}
&&&x&&&&x&&&&&&&&x&x&x&&x&x&0&\\
&&&x&&&&x&x&&&&&&&x&x&x&&x&x&1&\\\cline{1-22}
\end{tabular}

\end{table}

\begin{table}[!htb]
	\vspace{0.5cm}\scriptsize\centering
	\caption{Extreme points for $\mathcal{P}_{n, m}$ when the pair $(n,m)$ does not satisfy condition~\eqref{m>12}}\label{tab:extr_points2}~\\
	\begin{tabular}{c|c|l} 
		\cline{1-3}
		$n$ & $m$ &  \multicolumn{1}{c}{extreme points} \\ \cline{1-3}
		$3$ & $2$ &  (2,0,0)\\ 
		$3$ & $3$ &  (0,0,0)\\ 
		$4$ & $3$ &  (0,3,0), (2,0,0)\\ 
		$4$ & $4$ &  (0,0,0), (0,1,0)\\ 
		$4$ & $5$ &  (0,0,1)\\ 
		$5$ & $4$ &  (1,2,0), (2,0,0)\\ 
		$5$ & $5$ &  (0,0,0), (0,1,0), (0,2,1), (1,0,0)\\ 
		$5$ & $6$ &  (0,0,0), (0,0,1), (0,1,3)\\ 
		$6$ & $5$ &  (0,4,1), (1,2,0), (2,0,0), (2,1,0)\\ 
		$6$ & $6$ &  (0,0,0), (0,2,0), (0,3,3), (1,0,0), (1,1,1)\\ 
		$6$ & $7$ &  (0,0,0), (0,0,1), (0,1,2), (0,2,5), (1,0,3)\\ 
		$7$ & $6$ &  (0,4,0), (1,3,1), (2,0,0), (3,0,0)\\ 
		$7$ & $7$ &  (0,0,0), (0,2,0), (0,3,2), (1,0,0), (1,1,0), (1,2,3), (2,0,1)\\ 
		$7$ & $8$ &  (0,0,0), (0,0,1), (0,1,1), (0,1,3), (0,2,4), (1,0,2), (1,0,3), (1,1,5)\\ 
		$7$ & $9$ &  (0,0,3), (0,0,5), (0,1,6), (1,0,7)\\ 
		$8$ & $7$ &  (0,4,0), (0,5,2), (1,3,0), (2,0,0), (2,2,1), (3,0,0)\\ 
		$8$ & $8$ &  (0,0,0), (0,2,0), (0,3,1), (0,4,4), (2,0,0), (2,0,1), (2,1,3)\\ 
		$8$ & $9$ &  (0,0,0), (0,0,1), (0,1,0), (0,1,3), (0,3,6), (1,0,1), (1,1,5), (2,0,5)\\ 
		$8$ & $10$ &  (0,0,2), (0,0,5), (0,2,8), (1,0,6), (1,0,7)\\ 
		$9$ & $8$ &  (0,4,0), (0,5,1), (1,4,2), (2,0,0), (2,2,0), (3,0,0), (3,1,1)\\ 
		$9$ & $9$ &  (0,0,0), (0,3,0), (0,4,3), (1,3,4), (2,0,0), (3,0,3)\\ 
		$9$ & $10$ &  (0,0,0), (0,0,1), (0,1,0), (0,1,3), (0,2,2), (0,3,5), (1,0,0), (1,1,5), (1,2,6), (2,0,4), (2,0,5)\\ 
		$9$ & $11$ &  (0,0,1), (0,0,5), (0,2,7), (1,0,5), (1,0,7), (1,1,8)\\ 
		$10$ & $9$ &  (0,4,0), (0,5,0), (0,6,3), (2,0,0), (3,0,0), (3,1,0), (4,0,1)\\ 
		$10$ & $10$ &  (0,0,0), (0,3,0), (0,4,2), (0,5,5), (1,2,0), (2,0,0), (2,2,4), (3,0,2), (3,0,3)\\ 
		$10$ & $11$ &  (0,0,0), (0,0,1), (0,1,0), (0,1,3), (0,2,1), (0,4,7), (1,0,0), (1,1,5), (2,0,3), (2,0,5), (2,1,6)\\ 
		$11$ & $10$ &  (0,4,0), (0,5,0), (0,6,2), (1,4,0), (1,5,3), (2,0,0), (3,2,2), (4,0,0), (4,0,1)\\ 
		$11$ & $11$ &  (0,0,0), (0,3,0), (0,4,1), (0,5,4), (1,4,5), (2,0,0), (2,1,0), (3,0,1), (3,0,3), (3,1,4)\\ 
		$12$ & $11$ &  (0,4,0), (0,5,0), (0,6,1), (0,7,4), (2,0,0), (2,3,0), (4,0,0), (4,0,1), (4,1,2)\\ 
		even $\ge 4$ & $\frac{3n}{2}$ &  (0,0,$m$) \\ 
		odd $ \ge 5$ & $\frac{3n-1}{2}$ &  (0,0,$m-$2) \\ 
		even $\ge 6$ & $\frac{3n-2}{2}$ &  (0,0,$m-$4), (0,0,$m-$3), (0,1,$m-$1) \\ \cline{1-3}
	\end{tabular}
\end{table}

\begin{exa} \label{exa:P8_10}
Suppose we are interested in determining the optimal values of a degree-based topological index $I$ for chemical graphs with 8 vertices and 8 edges. As indicated in Table~\ref{tab:extr_points2}, the polytope 
$\mathcal{P}_{8, 8}$ contains only seven extreme points, namely, $(0,0,0)$, $(0,2,0)$, $(0,3,1)$, $(0,4,4)$, $(2,0,0)$, $(2,0,1)$, and $(2,1,3)$.
To determine the optimal values of the index $I$, it is sufficient to evaluate 
$\widehat{I}(p)$ for these seven points 
$p$. For example, if $I$ is the Randi\'c index $R$, the formula in Example~\ref{exa:maxRandic} gives $
\widehat{R}(0, 0, 0)  = 0$, $
\widehat{R}(0, 2, 0)  \simeq -0.2123$, $
\widehat{R}(0, 3, 1)  \simeq -0.3016$, $
\widehat{R}(0, 4, 4)  \simeq -0.3573$, $
\widehat{R}(2, 0, 0)  \simeq -0.1362$, $
\widehat{R}(2, 0, 1)  \simeq -0.1194$,
and $
\widehat{R}(2, 1, 3)  \simeq -0.1919$.

These observations show that the chemical graphs maximizing the Randić index all correspond to the point 
$(0,0,0)$, while those minimizing it correspond to the point 
$(0,4,4)$. One can readily verify that the unique graph with 8 vertices and 8 edges associated with 
$(0,0,0)$ is the cycle 
$\C{8}$. Likewise, the two graphs depicted in Figure~\ref{fig:twographs} are precisely those that achieve the minimum Randić index among all chemical graphs on 8 vertices with 8 edges.
In contrast, the graph shown in Figure~\ref{fig:onegraph} corresponds to the point 
$(1,1,1)$, which neither maximizes nor minimizes the Randić index. Note that 
this point lies in the interior of 
$\mathcal{P}_{8, 8}$, since it strictly satisfies all inequalities in its H-representation (see Figure~\ref{fig:chem1}). Consequently, the graph in Figure~\ref{fig:onegraph} is not extremal in 
$\Gc{3}(8,8)$, not only for the Randić index, but for any degree-based topological index.
\end{exa}
\begin{figure}[!htb]
\centering\includegraphics[scale=1.0]{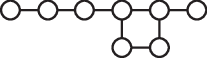}
\caption{A graph in 
$\Gc{3}(8,8)$ that is not extremal with respect to any degree-based topological index.}
	\label{fig:onegraph}
\end{figure}

\begin{exa}[Chemical trees, unicyclic and bicyclic chemical graphs] \label{exa:trees}
Table~\ref{tab:extr_points1} shows that, for any even number 
$n\geq 13$, no matter how large, the polytope 
$\mathcal{P}_{n, n-1}$
 corresponding to chemical trees of order 
$n$ has only 12 extreme points. A thirteenth extreme point $\V{8d}=\left(1, \frac{n-1}{2}, \frac{n-5}{2}\right)$ must be included when 
$n$ is odd. Similarly, the polytopes corresponding to chemical unicyclic ($m=n$) and bicyclic ($m=n+1$) graphs of even order $n$
and size $m$ at least 12
have 11 and 15 extreme points, respectively, with 
$\V{8d}$ added in each case if 
$n$ is odd. Thus, as illustrated in Section \ref{sec:application}, it is sufficient to evaluate only a small number of points to determine a chemical tree, unicyclic, or bicyclic chemical graph that optimizes a degree-based topological index.
\end{exa}

\section{The online tool ChemicHull} \label{sec:chemichull}
As indicated in Section  \ref{sec:geometry}, a polytope can be represented either by its extreme points (V-representation) or by its facets (H-representation). It is worth noting that, although two polytopes may share the same set of extreme points, their facets can differ. For instance, as observed in Section~\ref{sec:extr_points}, all chemical trees with an even number $n\geq 13$ of vertices share the same set of extreme points. However, depending on the value of 
$n \bmod 12$, their facets may vary. For instance, when 
$n \bmod 12=2$, the polytope 
$\mathcal{P}_{n,n-1}$  has ten facets, whereas when 
$n \bmod 12=4$, it has only seven. This difference arises because several extreme points coincide in the latter case; indeed, when 
$n \bmod 12=4$, we have $
\V{10a} = \V{10b} = \V{10c}$ and $ 
\V{11a} = \V{11b} = \V{11c}.
$ The two polytopes (with $n\bmod 12=2$ or 4) share six facets, namely,
\begin{align*}
m_{12} & \ge 0,\\
m_{13} & \ge 0,\\
m_{33} & \ge 0,\\
-4 m_{12} - 4 m_{13} + m_{33} & \ge -n - 5,\\
(n-4) m_{12} + \left(\frac{n}{2} - 2\right) m_{13} + \left(3 - \frac{n}{2}\right) m_{33} & \ge 2n - 8,\\
(n-7) m_{12} + (n-6) m_{13} + (4 - n) m_{33} & \ge 2n - 14.
\end{align*}
Moreover, when $n\bmod 12=2$, $\mathcal{P}_{n,n-1}$ possesses the following four additional facets.
\begin{align*}
-2 m_{12} -  m_{13}  & \ge \frac{-2(n+1)}{3},\\
-m_{12} + m_{33} & \ge \frac{-n-2}{4},\\
-2 m_{12} -2 m_{13} +  m_{33} & \ge \frac{-2 (n+4)}{3},\\
(2n-16) m_{12} +(2n-13) m_{13} + (10-2n) m_{33} & \ge 4n-32,
\end{align*} 
whereas for 
$n \bmod 12=4$, $\mathcal{P}_{n,n-1}$ possesses the following additional facet:
\begin{align*}
-3 m_{12} -2 m_{13} +  m_{33} & \ge \frac{-3 (n+4)}{4}.
\end{align*}

As reported in~\cite{delta3}, a total of 96 polytopes exist, each characterized by its own distinct set of facets. An online tool, called ChemicHull \cite{chemichull} has been developed, that allows the visualization of these 96 polytopes.
When neither 
$n$ nor 
$m$ is specified, ChemicHull returns the complete set of the 96 polytopes. It is possible to filter these polytopes by assigning specific values to 
$n$ and/or 
$m$, or by using formulas involving these parameters. For example, leaving 
$n$ free while specifying 
$m=n-1$ yields the twelve polytopes corresponding to chemical trees. 
Once 
$n$
and 
$m$ are specified, ChemicHull provides the V-Representation and the H-representation of the corresponding polytope. As an illustration, Figure \ref{fig:chem1} shows the 7 extreme points and the 8 facets of 
$\mathcal{P}_{8,8}$.
	 It should be noted that, unlike in this paper, the coordinates are provided directly in terms of the five values 
$m_{12}$, $m_{13}$, $m_{22}$, $m_{23}$, and $m_{33}$ (in this order), eliminating the need for the user to calculate 
$m_{22}$ and 
$m_{23}$.

\begin{figure}[h!]
\begin{center}
\includegraphics[scale=0.57]{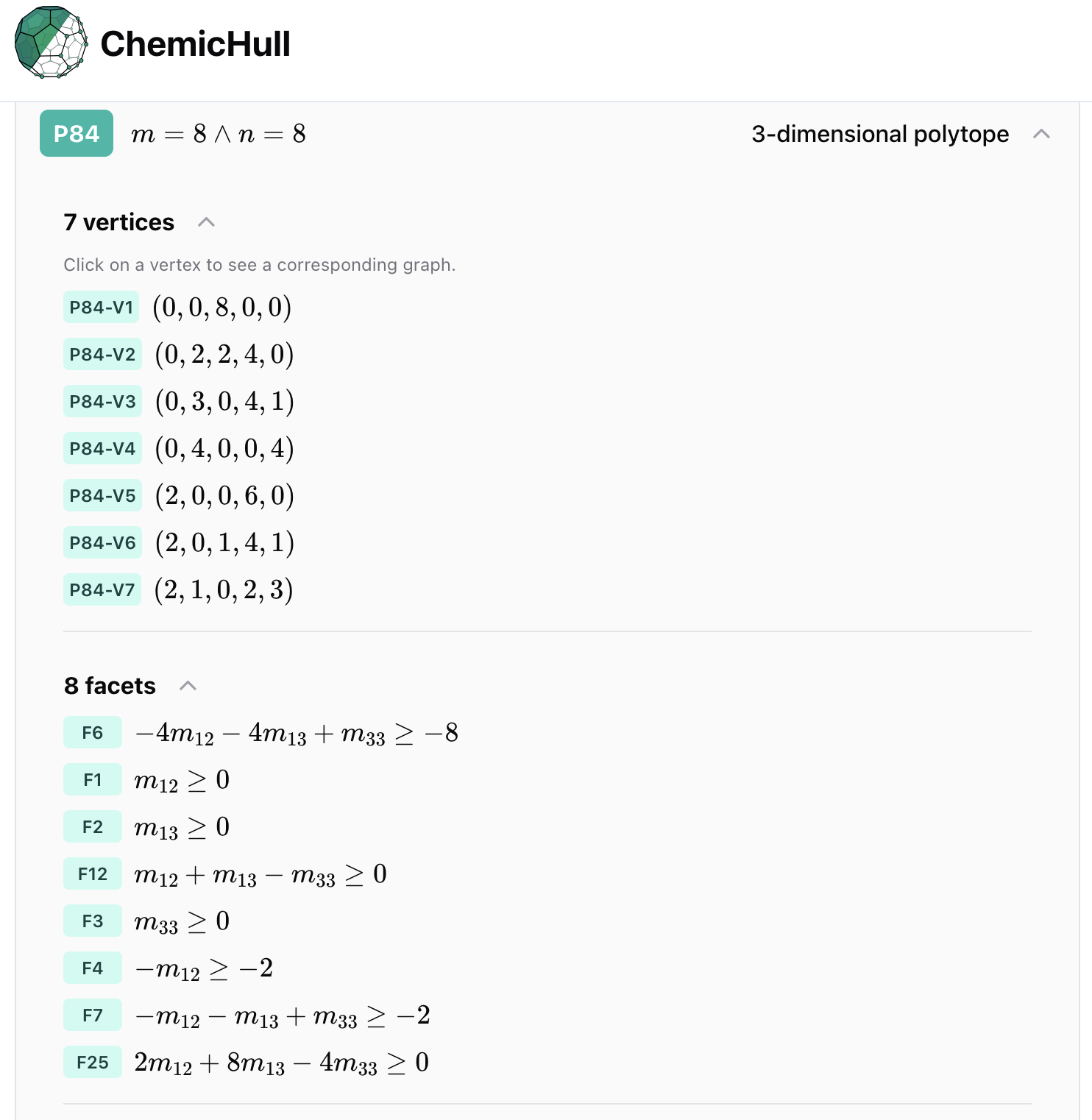}
\caption{V-Representation and H-representation of  $\mathcal{P}_{8, 8}$ in ChemicHull.}\label{fig:chem1}
\end{center}
\end{figure}

In addition to listing the extreme points and facets, the interface provides a three-dimensional graphical representation of the polytopes (see Figure~\ref{fig:chem2}).

\begin{figure}[h!]
\begin{center}
\includegraphics[scale=0.57]{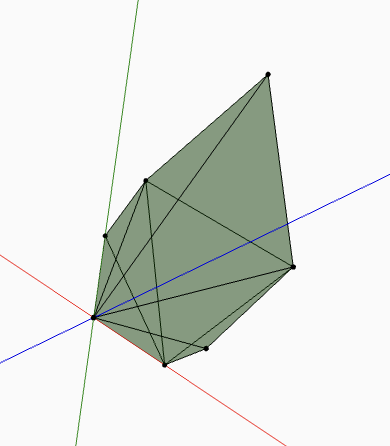}
\caption{A three-dimensional graphical representation of the polytope $\mathcal{P}_{8, 8}$}\label{fig:chem2}
\end{center}
\end{figure}
The interface also allows users to specify a degree-based topological index with ease. For example, Figure~\ref{fig:chem3} demonstrates how to define the Randić index.
 \begin{figure}[h!]
\begin{center}
\includegraphics[scale=0.43]{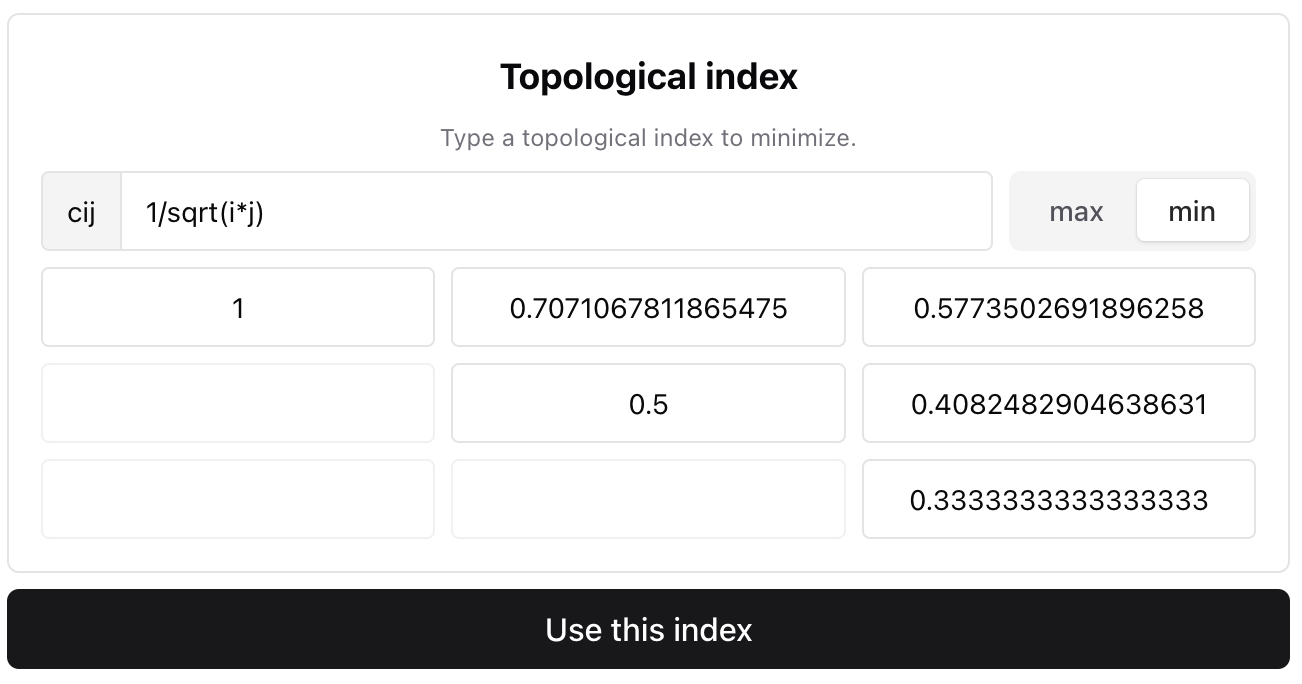}
\caption{Definition of a degree-based topological index (in this case, the Randić index).}\label{fig:chem3}
\end{center}
\end{figure}
Once a degree-based topological index is defined, ChemicHull can display the coordinates of the chemical graphs that maximize or minimize this index (see Figure~\ref{fig:chem4}).
\begin{figure}[h!]
\begin{center}
\includegraphics[scale=0.52]{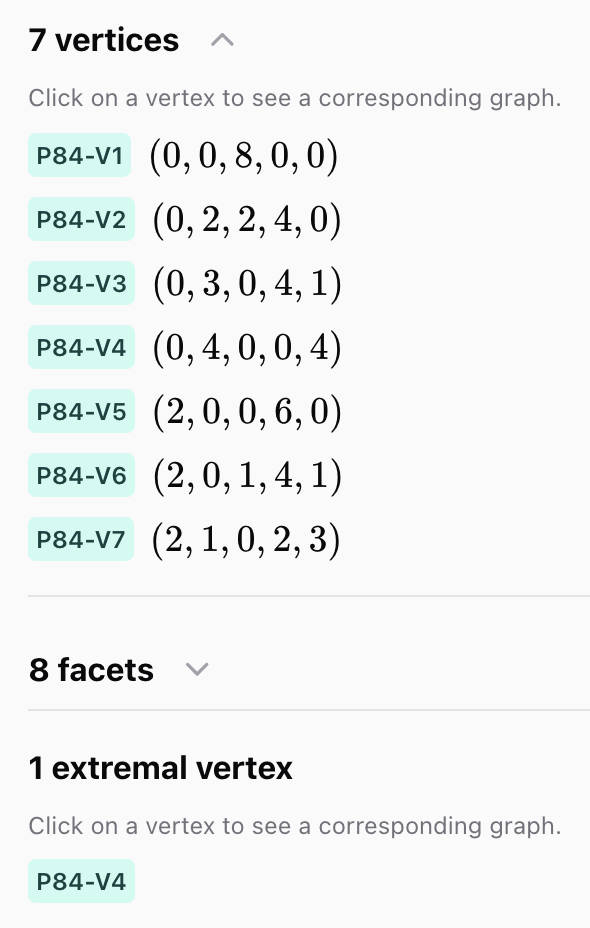}
\caption{Extreme point (P84-V4) that minimizes the Randić index in $\mathcal{P}_{8, 8}$.}\label{fig:chem4}
\end{center}
\end{figure}

Furthermore, users can visualize any graph corresponding to a selected point by simply clicking on it (see Figure~\ref{fig:chem5}). The construction of a graph from the coordinates 
$m_{ij}$ follows the algorithm described in~\cite{Hansen2017}.

\begin{figure}[h!]
\begin{center}
\includegraphics[scale=0.19]{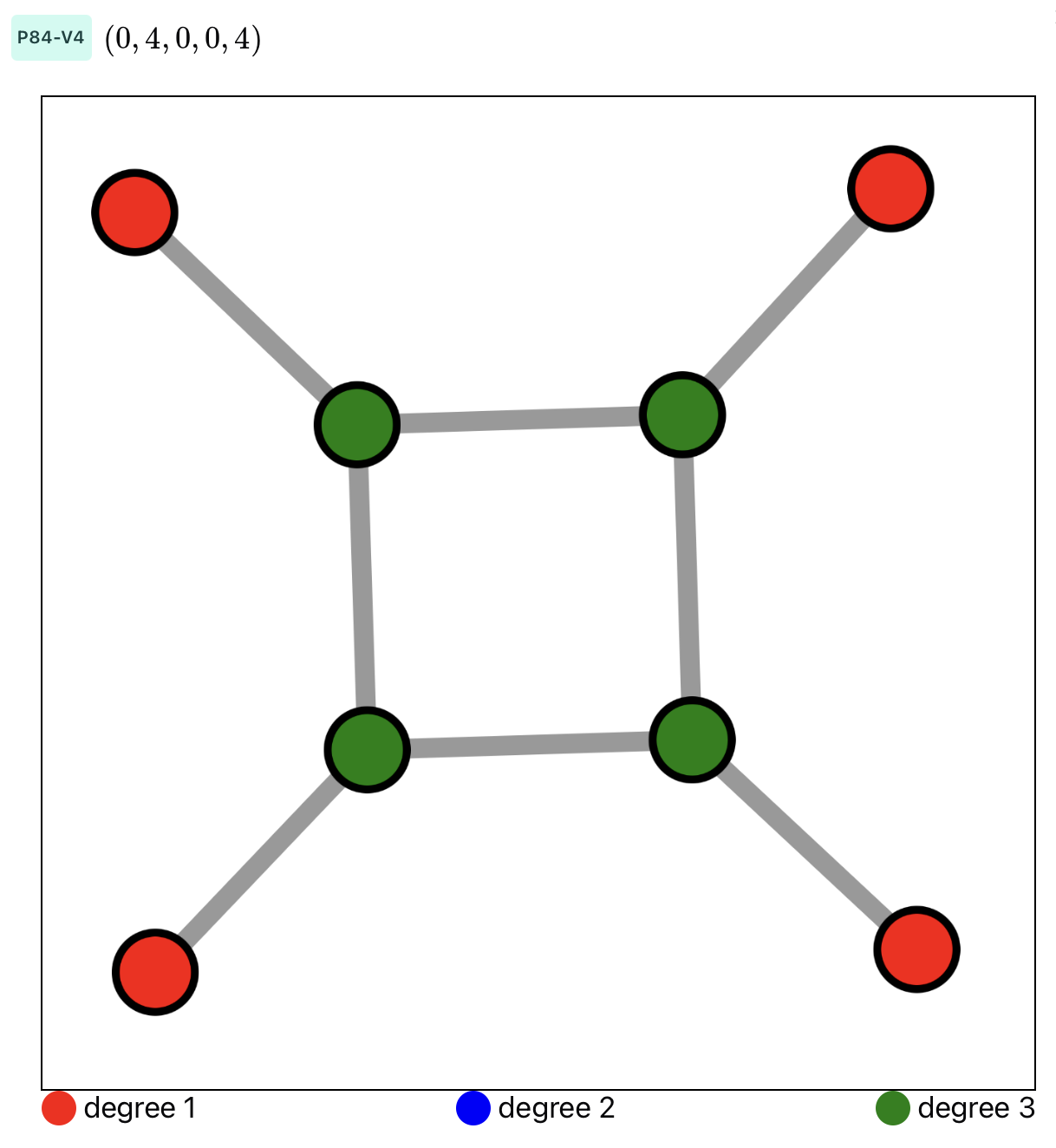}
\caption{A graph corresponding to the extreme point P84-V4=$(0,4,0,0,4)$ in $\mathcal{P}_{8, 8}$.}\label{fig:chem5}
\end{center}
\end{figure}

\section{Examples illustrating the applications of ChemicHull}\label{sec:application}

In this section, we reproduce established results from the literature to demonstrate the effectiveness of the approach based on the extreme points of the polytopes $\mathcal{P}_{n,m}$.
We  focus on the following degree-based topological indices:
\begin{itemize}[nosep]
    \item the generalized Randi\'c index $R_{\alpha}$ is  defined by $c_{ij}=(ij)^\alpha$ \cite{li2006}. When $\alpha=-\frac{1}{2}$, this is the standard Randi\'c index introduced in 1975 \cite{Randic1975};
    \item as mentioned in \cite{Gutman2021}, the Sombor index, denoted $SO$, is defined by $c_{ij}=\sqrt{i^2+j^2}$, while the reduced Sombor index, denoted $SO_{red}$, is defined by $c_{ij}=\sqrt{(i-1)^2+(j-1)^2}$;
     \item the atom-bond sum-connectivity index, denoted $ABS$ and introduced in \cite{ABS} is defined by $c_{ij}=\sqrt{\frac{i+j-2}{i+j}}$.\\   
\end{itemize}

Recall that for a degree-based topological index $I$, we have $\widehat{I}(m_{12},m_{13},m_{33})=c'_{12}m_{12}+c'_{13}m_{13}+c'_{33}m_{33}$, where 
\begin{itemize}[nosep]
    \item $c'_{12}=c_{12}-4c_{22}+3c_{23}$,
\item $c'_{13}=c_{13}-3c_{22}+2c_{23}$,
\item $c'_{33}=c_{22}-2c_{23}+c_{33}.$\\
\end{itemize}
The $c'_{ij}$ values for the above topological indices are given in Table \ref{tab:c'ij}.
The examples presented in the following subsections cover a broad spectrum of cases, as the optimal graphs in $\mathcal{G}_3(n,m)$ correspond to $\V{1}$, $\V{2}$, $\V{3}$, $\V{11c}$ or $\V{12b}$, depending on the chosen topological index and the condition relating $m$ to $n$. The reader interested in more examples is invited to consult \cite{article33}, in which the extremal graphs in $\mathcal{G}_3(n,m)$ for 33 topological indices are described.

\begin{table}[!htb]
	\centering
	\caption{Some values associated with five topological indices}\label{tab:c'ij}~\\[1ex]
	\begin{tabular}{c|l|c|l|l|l|} 
    \multicolumn{1}{c}{}&\multicolumn{1}{c}{$R_{-\frac{1}{2}}$}&\multicolumn{1}{c}{$R_{-1}$}&\multicolumn{1}{c}{$SO$ }&\multicolumn{1}{c}{$SO_{red}$}&\multicolumn{1}{c}{$ABS$}\\
		\cline{2-6}
	$c_{ij}$& \multicolumn{1}{c|}{$\frac{1}{\sqrt{ij}}$}&$\frac{1}{ij}$&$\sqrt{i^2+j^2}$&$\sqrt{(i-1)^2+(j-1)^2}$&$\sqrt{\frac{i+j-2}{i+j}}$\\ 		\cline{2-6}
    $c'_{12}$&$\simeq -0.068$&0&$\simeq 1.739$&$\quad\quad\quad\simeq 2.051$&$\simeq 0.073$\\\cline{2-6}
    $c'_{13}$&$\simeq -0.106$&$-\frac{1}{12}$&$\simeq 1.888$&$\quad\quad\quad\simeq 2.229$&$\simeq 0.135$\\\cline{2-6}
    $c'_{33}$&$\simeq 0.017$&$\frac{1}{36}$&$\simeq -0.140$&$\quad\quad\quad\simeq -0.229$&$\simeq -0.026$\\\cline{2-6}
	\end{tabular}
\end{table}

\subsection{Conditions for paths to be optimal chemical trees}
\label{chemical trees}

The following results provide  sufficient conditions on a degree-based topological index for $\Path{n}$ to be an optimal graph in $\mathcal{G}_3(n,n-1)$. 
\begin{thm}\label{pathmax}
    If $I$ is a degree-based topological index such that $c'_{13}<c'_{12}<-c'_{33}<0$, then $\Path{n}$ is the only tree in $\mathcal{G}_3(n,n-1)$ that maximizes  $I$.
\end{thm}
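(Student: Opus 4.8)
The plan is to bound $\widehat{I}$ directly on every chemical tree, rather than comparing the extreme points of $\mathcal{P}_{n,n-1}$ one by one. Since $I(T)=\widehat{I}(m_{12},m_{13},m_{33})+C_{I}(n,n-1)$, it suffices to show that $\Path{n}$ is the unique maximizer of the linear function $\widehat{I}$ over $\Gc{3}(n,n-1)$. Note first that $\Path{n}$ has exactly two $12$-edges and $n-3$ $22$-edges, so it corresponds to the point $(m_{12},m_{13},m_{33})=(2,0,0)$ and $\widehat{I}(\Path{n})=2c'_{12}$; moreover $(2,0,0)$ forces $n_1=m_{12}+m_{13}=2$, and a tree with exactly two leaves is a path, so $\Path{n}$ is the only chemical tree at this point.

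The key step is the inequality
\[
m_{33}\ \le\ m_{12}+m_{13}-2,
\]
valid for every chemical tree $T$, with equality if and only if $T=\Path{n}$. I would prove it from two elementary facts. From the handshake identity $\sum_{v}(d(v)-2)=2m-2n=-2$ for a tree, and using $\Delta(T)\le 3$, we get $n_3-n_1=-2$; combined with \eqref{eq_n1} this gives $m_{12}+m_{13}=n_1=n_3+2$. On the other hand, the subgraph of $T$ induced by its degree-$3$ vertices is a forest, hence has at most $n_3-1$ edges if $n_3\ge 1$ and none if $n_3=0$; in both cases $m_{33}\le n_3=m_{12}+m_{13}-2$, with equality exactly when $n_3=0$, i.e.\ when $T$ is a path.

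It then remains to chain the three hypotheses. Because $-c'_{33}<0$ we have $c'_{33}>0$, so the inequality above yields
\[
\widehat{I}(T)\ \le\ (c'_{12}+c'_{33})\,m_{12}+(c'_{13}+c'_{33})\,m_{13}-2c'_{33}.
\]
Since $c'_{13}<c'_{12}$ gives $c'_{13}+c'_{33}<c'_{12}+c'_{33}$ and $m_{13}\ge 0$, the right-hand side is at most $(c'_{12}+c'_{33})(m_{12}+m_{13})-2c'_{33}=(c'_{12}+c'_{33})(n_3+2)-2c'_{33}=(c'_{12}+c'_{33})\,n_3+2c'_{12}$. Finally $c'_{12}<-c'_{33}$ gives $c'_{12}+c'_{33}<0$, and since $n_3\ge 0$ we conclude $\widehat{I}(T)\le 2c'_{12}=\widehat{I}(\Path{n})$. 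For the uniqueness claim I would track equality: the last inequality is strict unless $n_3=0$, and $n_3=0$ already forces $m_{13}=0$ and $m_{33}=0=m_{12}+m_{13}-2$, so equality propagates back through all the estimates and $T=\Path{n}$. Adding $C_{I}(n,n-1)$ back transfers the statement from $\widehat{I}$ to $I$.

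The main obstacle is the structural inequality $m_{33}\le m_{12}+m_{13}-2$ together with a careful equality analysis, since the statement asserts \emph{uniqueness} of the maximizer and not merely that $\Path{n}$ attains the maximum; once that lemma is in hand the rest is a short sign-chase that uses each of $c'_{33}>0$, $c'_{13}<c'_{12}$, and $c'_{12}+c'_{33}<0$ exactly once. One could alternatively verify the claim by evaluating $\widehat{I}$ at the at most thirteen extreme points of $\mathcal{P}_{n,n-1}$ (Table~\ref{tab:extr_points1} for $n\ge 13$, Table~\ref{tab:extr_points2} for $3\le n\le 12$), but the direct argument above is shorter and uniform in $n$.
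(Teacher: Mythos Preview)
Your proof is correct and takes a genuinely different route from the paper. The paper proceeds by evaluating $\widehat{I}$ at each of the (at most thirteen) extreme points of $\mathcal{P}_{n,n-1}$ listed in Tables~\ref{tab:extr_points1} and~\ref{tab:extr_points2}, bounding every $\widehat{I}(\textup{V}i)$ above by $2c'_{12}$ with a small case analysis (and a separate check for $3\le n\le 12$). You instead prove a single structural inequality, $m_{33}\le n_3$ with equality only when $n_3=0$, obtained from $n_1=n_3+2$ together with the fact that the degree-$3$ vertices induce a forest; then you chain the three sign hypotheses $c'_{33}>0$, $c'_{13}<c'_{12}$, $c'_{12}+c'_{33}<0$ to reach $\widehat{I}(T)\le 2c'_{12}$, and your equality tracking cleanly gives uniqueness.

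Your argument is shorter, uniform in $n$, and self-contained in that it does not rely on the polyhedral description of $\mathcal{P}_{n,n-1}$ imported from~\cite{delta3}. The paper's approach, by contrast, is the one that illustrates the polyhedral machinery the article is built to advertise, and it makes explicit which competing extreme families (\V{7a}--\V{7c}, \V{8c}, \V{8d}, \V{10a}--\V{10c}, \V{11a}--\V{11c}, \V{12a}) have to be beaten. One small remark: your bound actually uses $m_{33}\le n_3$, while the sharper $m_{33}\le n_3-1$ for $n_3\ge 1$ is what drives the equality case; you state both, but it is worth making the logical role of the sharper bound explicit when you write it up.
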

\begin{proof}
Let $I$ be a degree-based topological index such that $c'_{13}<c'_{12}<-c'_{33}<0$. We have to prove that $\V{2}=(2,0,0)$ is the unique point that maximizes $\widehat{I}$. We know from Table \ref{tab:extr_points1} that the candidate points to maximize $\widehat{I}$ over all chemical trees in $\mathcal{G}_3(n,n-1)$ of order at least 13 are $\V{2}$, $\V{7a}$, $\V{7b}$, $\V{7c}$, $\V{8c}$, $\V{10a}$, $\V{10b}$, $\V{10c}$, $\V{11a}$, $\V{11b}$, $\V{11c}$ and $\V{12a}$ as well as $\V{8d}$ if $n$ is odd. Note that 
\begin{itemize}[nosep]
     \item  $\widehat{I}(\V{7a})$ and $\widehat{I}(\V{7c})$ are at most  $\lceil\frac{n-4}{4}\rceil c'_{12}<2c'_{12}$;
     \item $\widehat{I}(\V{7b})\leq\lceil\frac{n+5}{4}\rceil c'_{12}+3c'_{33}\leq 5c'_{12}+3c'_{33}<2c'_{12}$;
     \item   $\widehat{I}(\V{8c})$ and $\widehat{I}(\V{8d})$ are at most  $\lceil\frac{n-1}{2}\rceil c'_{13}+
     \lfloor\frac{n-4}{2}\rfloor c'_{33}\leq  (\lceil\frac{n-1}{2}\rceil -
     \lfloor\frac{n-4}{2}\rfloor ) c'_{13} = 2 c'_{13} <2c'_{12}$;
     \item   $\widehat{I}(\V{10a})$, $\widehat{I}(\V{10b})$ and $\widehat{I}(\V{10c})$ are at most  $\lceil\frac{n-3}{3}\rceil c'_{13}+2c'_{33}\leq 4c'_{13}+2c'_{33}<2c'_{13}<2c'_{12}$;
     \item   $\widehat{I}(\V{11a})$, $\widehat{I}(\V{11b})$ and $\widehat{I}(\V{11c})$ are at most  $\lceil\frac{n+2}{3}\rceil c'_{12}+
     \lfloor\frac{n-5}{3}\rfloor c'_{33} < (\lceil\frac{n+2}{3}\rceil -
     \lfloor\frac{n-5}{3}\rfloor )c'_{12}  = 3 c'_{12} <2c'_{12}$;
\end{itemize}
Since $\widehat{I}(\V{2})=2c'_{12}$, we conclude that for $n\geq 13$, the path is the only chemical tree in $\mathcal{G}_3(n,n-1)$ that maximizes the $I$ index.
For $3\leq n\leq 12$, it is easy to verify from Table \ref{tab:extr_points2} that (2,0,0) (that is, the path) maximizes $\widehat{I}$. 
\end{proof}

Given that  minimizing $\widehat{I}$ is equivalent to maximizing $-\widehat{I}$, we can derive the following corollary.
\begin{cor}\label{pathmin}
    If $I$ is a degree-based topological index such that $c'_{13}>c'_{12}>-c'_{33}>0$, then $\Path{n}$ is the only  tree in $\mathcal{G}_3(n,n-1)$ that minimizes  $I$.
\end{cor}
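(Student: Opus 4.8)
The plan is to reduce the statement directly to Theorem~\ref{pathmax} by passing to the negated index, exactly as the sentence preceding the corollary suggests. Given a degree-based topological index $I$ with coefficients $c_{ij}$ satisfying $c'_{13}>c'_{12}>-c'_{33}>0$, I would introduce the degree-based topological index $I^-$ whose coefficients are $-c_{ij}$. Since each primed coefficient is a fixed linear combination of the $c_{ij}$ (namely $c'_{12}=c_{12}-4c_{22}+3c_{23}$, $c'_{13}=c_{13}-3c_{22}+2c_{23}$, $c'_{33}=c_{22}-2c_{23}+c_{33}$), negating all $c_{ij}$ negates all primed coefficients; hence the primed coefficients of $I^-$ are $-c'_{12}$, $-c'_{13}$, $-c'_{33}$, and $\widehat{I^-}=-\widehat{I}$ as linear functions on $\R{3}$.

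Next I would translate the hypothesis. Multiplying the chain $c'_{13}>c'_{12}>-c'_{33}>0$ by $-1$ and reversing the inequalities gives $-c'_{13}<-c'_{12}<c'_{33}<0$, which is precisely $(-c'_{13})<(-c'_{12})<-(-c'_{33})<0$ — i.e. exactly the hypothesis of Theorem~\ref{pathmax} for the index $I^-$ with its primed coefficients $-c'_{12},-c'_{13},-c'_{33}$. Applying Theorem~\ref{pathmax} to $I^-$ then yields that $\Path{n}$ is the unique tree in $\mathcal{G}_3(n,n-1)$ that maximizes $I^-$.

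Finally I would invoke the elementary fact recalled in Section~\ref{sec:basic}: for fixed $n$ and $m=n-1$ one has $I(G)=\widehat{I}(m_{12},m_{13},m_{33})+C_I(n,n-1)$, so minimizing $I$ over $\mathcal{G}_3(n,n-1)$ is the same as maximizing $-\widehat{I}+(\text{const})=\widehat{I^-}+(\text{const})$, i.e. maximizing $I^-$; the additive constant ($C_{I^-}(n,n-1)=-C_I(n,n-1)$) does not affect which graphs are optimal. Hence the set of trees minimizing $I$ equals the set of trees maximizing $I^-$, which by the previous step is exactly $\{\Path{n}\}$, and in particular uniqueness is preserved.

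There is essentially no obstacle here; the only points requiring care are the sign bookkeeping — checking that negating every $c_{ij}$ negates every $c'_{ij}$ and that the three-term strict inequality flips in the correct direction — together with the remark that the constant $C_I(n,m)$ is irrelevant to the $\arg\min/\arg\max$. Everything else is a direct citation of Theorem~\ref{pathmax}.
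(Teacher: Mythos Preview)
Your proposal is correct and follows exactly the approach the paper indicates: the paper simply notes that minimizing $\widehat{I}$ is equivalent to maximizing $-\widehat{I}$ and derives the corollary from Theorem~\ref{pathmax}, which is precisely your negation-and-duality argument spelled out in detail.
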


It is shown in \cite{CAPOROSSI1999469} and \cite{GUTMAN1999366} that  $\Path{n}$ maximizes $R_{-\frac{1}{2}}$ in $\mathcal{G}_4(n,n-1)$. Hence, this is also true for chemical trees of maximum degree at most 3. This is a direct corollary of Theorem \ref{pathmax}. Indeed, as shown in Table \ref{tab:c'ij}, for this index, we have 
$$c'_{13}\simeq -0.106<c'_{12}\simeq -0.068<-c'_{33}=-0.017<0.$$

As another example, the authors of \cite{Liu2022} have characterized the 14 trees in $\mathcal{G}_4(n,n-1)$ with smallest 
Sombor and reduced Sombor indices. In particular, they have shown that $\Path{n}$ has minimum value. Hence, this is also true for chemical trees of maximum degree at most 3, and this is a direct consequence of Corollary \ref{pathmin}. Indeed,  as shown in Table \ref{tab:c'ij}, for the $SO$  index, we have 
$$c'_{13}\simeq 1.888>c'_{12}\simeq 1.739>-c'_{33}\simeq0.140>0,$$
while for $SO_{red}$, we have 
$$c'_{13}\simeq 2.229>c'_{12}\simeq 2.051>-c'_{33}\simeq0.229>0.$$
As a further example, consider the $ABS$ index. The authors of \cite{Zuo2024} show, among other results, that $\Path{n}$ minimizes this index among all chemical trees in $\mathcal{G}_4(n,n-1)$.  Hence, this is also true for trees of maximum degree at most 3, and this follows directly from  Corollary \ref{pathmin}. Indeed, as shown in Table \ref{tab:c'ij}, for the $ABS$ index, we have 
$$c'_{13}\simeq 0.135>c'_{12}\simeq 0.073>-c'_{33}\simeq 0.026>0.$$

\subsection{Conditions for cycles to be optimal unicyclic chemical graphs}
\label{chemical trees}

The following results provide sufficient conditions on a degree-based topological index for $\C{n}$ to be an optimal graph in $\mathcal{G}_3(n,n)$. 
\begin{thm}\label{cyclemin}
    If $I$ is a degree-based topological index such that $\max\{0,-c'_{33}\}<\min\{c'_{12},c'_{13}\}$, then $\C{n}$ is the only graph in $\mathcal{G}_3(n,n)$ that minimizes  $I$.
\end{thm}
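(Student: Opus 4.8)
The plan is to mirror the structure of the proof of Theorem~\ref{pathmax}, but now for unicyclic chemical graphs, i.e., the case $m=n$. The goal is to show that under the hypothesis $\max\{0,-c'_{33}\}<\min\{c'_{12},c'_{13}\}$, the point $\V{1}=(0,0,0)$ is the unique minimizer of $\widehat{I}$ over the extreme points of $\mathcal{P}_{n,n}$, since minimizing $I(G)$ over $\mathcal{G}_3(n,n)$ reduces to minimizing $\widehat{I}$ over the extreme points listed in Table~\ref{tab:extr_points1}. Note first that $\widehat{I}(\V{1})=\widehat{I}(0,0,0)=0$, so the task is to show every other candidate extreme point has strictly positive $\widehat{I}$-value. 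The hypothesis gives in particular $c'_{12}>0$, $c'_{13}>0$, and $c'_{12}+c'_{33}>0$, $c'_{13}+c'_{33}>0$ (the latter two because $-c'_{33}<c'_{12}$ and $-c'_{33}<c'_{13}$); these will be the inequalities doing all the work.

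First I would enumerate the candidate minimizers. From the $m=n$ block of Table~\ref{tab:extr_points1}, the extreme points of $\mathcal{P}_{n,n}$ (for $n\ge 12$) are $\V{1}$, $\V{7a}$, $\V{7b}$, $\V{7c}$, $\V{8c}$, $\V{10a}$, $\V{10b}$, $\V{10c}$, $\V{11a}$, $\V{11b}$, $\V{11c}$, plus $\V{8d}$ if $n$ is odd. For each of these eleven or twelve points I would bound $\widehat{I}$ from below. The points $\V{7a}$, $\V{7b}$, $\V{7c}$, $\V{10a}$, $\V{10b}$, $\V{10c}$ have $m_{12},m_{13}\ge 0$ and $m_{33}\ge 0$ (one checks from Table~\ref{tab:vi} using $m=n$, so $6n-5m=n>0$ and the modular terms are nonnegative), hence $\widehat{I}$ evaluated there is a nonnegative combination of $c'_{12},c'_{13},c'_{33}$ with at least one of the first two coefficients strictly positive — so $\widehat{I}>0$ immediately. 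The genuinely delicate points are $\V{8c}$, $\V{8d}$, $\V{11a}$, $\V{11b}$, $\V{11c}$, where $m_{33}$ can be large and $c'_{33}$ may be negative, so one must pair the negative $c'_{33}m_{33}$ contribution against a positive $c'_{12}m_{12}$ or $c'_{13}m_{13}$ contribution, exactly as in the proof of Theorem~\ref{pathmax}. For $\V{8c}$ with $m=n$: $m_{13}=\frac{n-n\bmod 2}{2}$ and $m_{33}=\frac{n-3(n\bmod 2)}{2}$, so $m_{13}-m_{33}=(n\bmod 2)\in\{0,1\}$; writing $\widehat{I}(\V{8c})=c'_{13}m_{13}+c'_{33}m_{33}$ and using $c'_{13}+c'_{33}>0$ when $c'_{33}<0$ (and trivially when $c'_{33}\ge 0$), one gets $\widehat{I}(\V{8c})\ge \min\{c'_{13},c'_{13}+c'_{33}\}\cdot(\text{something positive})>0$; the same bookkeeping handles $\V{8d}$ and the three $\V{11\ast}$ points, where $m_{12}$ (resp. the excess of $m_{12}$ over part of $m_{33}$) carries the positive weight via $c'_{12}+c'_{33}>0$.

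The main obstacle — and it is a bookkeeping obstacle rather than a conceptual one — is handling the floor/ceiling and mod-$3$ expressions in the coordinates of $\V{11a}$, $\V{11b}$, $\V{11c}$ cleanly, and making sure the pairing argument (negative $c'_{33}m_{33}$ absorbed by positive $c'_{12}m_{12}$) actually goes through with the right constants; in the path-maximization proof this worked because $\lceil\frac{n+2}{3}\rceil-\lfloor\frac{n-5}{3}\rfloor=3$ is a fixed small gap, and I expect an analogous fixed gap here once $m=n$ is substituted. A secondary point to be careful about is that the hypothesis is $\max\{0,-c'_{33}\}<\min\{c'_{12},c'_{13}\}$, which does \emph{not} constrain the sign of $c'_{33}$ itself, nor the relative order of $c'_{12}$ and $c'_{13}$ — so unlike Theorem~\ref{pathmax} the argument must be symmetric in the roles of $c'_{12}$ and $c'_{13}$. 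Finally, I would close the small cases $12\le n$ not covered (none, since condition~\eqref{m>12} with $m=n$ requires $n\ge 12$) — actually the pairs $(n,n)$ with $n<12$, namely $n\in\{3,\dots,11\}$, must be checked directly from Table~\ref{tab:extr_points2}, verifying by inspection that $(0,0,0)$ is listed and attains the strict minimum of $\widehat{I}$ among the few extreme points there. Uniqueness follows since $\widehat{I}(\V{1})=0$ and all other extreme points give strictly positive values, so no interior point or convex combination can tie it.
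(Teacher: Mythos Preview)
Your proposal is correct and follows essentially the same approach as the paper: enumerate the extreme points of $\mathcal{P}_{n,n}$ from Table~\ref{tab:extr_points1}, show each one other than $\V{1}=(0,0,0)$ has strictly positive $\widehat{I}$-value using $c'_{12},c'_{13}>0$ and $c'_{12}+c'_{33},c'_{13}+c'_{33}>0$, then check small $n$ from Table~\ref{tab:extr_points2}. The paper compresses all of your case analysis into the single phrase ``it is easy to observe that $\widehat{I}(\text{V}i)$ is strictly positive for all these points''; one small slip in your write-up is that $\V{7b}$ and $\V{10c}$ have $m_{33}>0$ possible, so they are not ``immediate'' when $c'_{33}<0$ and also require the (trivial, since $m_{33}\le 3$ there) pairing argument you describe for the $\V{8\ast}$ and $\V{11\ast}$ points.
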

\begin{proof}
Let $I$ be a degree-based topological index such that $\max\{0,-c'_{33}\}<\min\{c'_{12},c'_{13}\}$. We have to prove that $\V{1}=(0,0,0)$ is the unique point that minimizes $\widehat{I}$. We know from Table \ref{tab:extr_points1} that the candidate points to minimize $\widehat{I}$ over all graphs in $\mathcal{G}_3(n,n)$ of order at least 12 are $\V{1}$, $\V{7a}$, $\V{7b}$, $\V{7c}$, $\V{8c}$, $\V{10a}$, $\V{10b}$, $\V{10c}$, $\V{11a}$, $\V{11b}$ and $\V{11c}$ as well as $\V{8d}$ if $n$ is odd. It is easy to observe that $\widehat{I}($V$i)$ is strictly positive for all these points, except for $\V{1}$ for which $\widehat{I}(\V{1})=0$. For smaller values of $n$, one can check from Table \ref{tab:extr_points2} that $(0,0,0)$ is always the point with minimum $I$ index value. We therefore conclude that for $n\geq 3$, $\C{n}$ is the only graph in $\mathcal{G}_3(n,n)$ that minimizes the $I$ index.
\end{proof}

Given that  maximizing $\widehat{I}$ is equivalent to minimizing $-\widehat{I}$,  this leads to the following corollary.

\begin{cor}\label{cyclemax}
    If $I$ is a degree-based topological index such that $\max\{c'_{12},c'_{13}\}<\min\{0,-c'_{33}\}$, then $\C{n}$ is the only chemical graph in $\mathcal{G}_3(n,n)$ that maximizes  $I$.
\end{cor}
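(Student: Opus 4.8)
The plan is to derive Corollary~\ref{cyclemax} directly from Theorem~\ref{cyclemin} by the substitution $I \mapsto -I$, exactly as the one-line remark preceding the statement suggests. First I would observe that if $I$ is the degree-based topological index with coefficients $c_{ij}$, then $-I$ is also a degree-based topological index, with coefficients $-c_{ij}$, and hence with $\widehat{(-I)} = -\widehat{I}$; in particular its associated constants are $(-c)'_{12} = -c'_{12}$, $(-c)'_{13} = -c'_{13}$, and $(-c)'_{33} = -c'_{33}$. A chemical graph $G \in \Gc{3}(n,n)$ maximizes $I$ if and only if it minimizes $-I$, and equivalently (by the identity $I(G) = \widehat{I}(p) + C_I(n,m)$, with the constant term depending only on $n$ and $m$) the realizable point $p$ maximizes $\widehat{I}$ if and only if it minimizes $\widehat{(-I)}$.

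Next I would translate the hypothesis. The assumption $\max\{c'_{12},c'_{13}\} < \min\{0,-c'_{33}\}$ says $c'_{12} < 0$, $c'_{13} < 0$, $c'_{12} < -c'_{33}$, and $c'_{13} < -c'_{33}$. Negating each inequality and rewriting in terms of $(-c)'_{ij} = -c'_{ij}$ gives $(-c)'_{12} > 0$, $(-c)'_{13} > 0$, $(-c)'_{12} > c'_{33} = -(-c)'_{33}$, and $(-c)'_{13} > -(-c)'_{33}$. That is precisely $\max\{0, -(-c)'_{33}\} < \min\{(-c)'_{12}, (-c)'_{13}\}$, the hypothesis of Theorem~\ref{cyclemin} applied to the index $-I$.

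Then I would invoke Theorem~\ref{cyclemin}: under this hypothesis, $\C{n}$ is the unique graph in $\Gc{3}(n,n)$ minimizing $-I$. Since minimizing $-I$ is the same as maximizing $I$, and the uniqueness passes through verbatim, $\C{n}$ is the unique graph in $\Gc{3}(n,n)$ maximizing $I$, which is the claim. The argument is entirely formal, so there is no real obstacle; the only point requiring a moment's care is the bookkeeping that the additive constant $C_I(n,m)$ plays no role — it is the same for every graph of order $n$ and size $m$, so optimizing $I$ and optimizing $\widehat{I}$ pick out the same realizable points, and flipping the sign of $I$ flips the sign of $\widehat{I}$ without disturbing which realizable point is optimal. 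Everything else is the straightforward negation of a chain of strict inequalities.
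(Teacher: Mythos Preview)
Your proposal is correct and follows precisely the approach indicated in the paper: the corollary is obtained from Theorem~\ref{cyclemin} by replacing $I$ with $-I$, using that maximizing $\widehat{I}$ is equivalent to minimizing $-\widehat{I}$. Your explicit bookkeeping with the coefficients $(-c)'_{ij}=-c'_{ij}$ and the translation of the inequality hypothesis is exactly the verification the paper omits.
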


It is shown in \cite{CAPOROSSI200385} that $\C{n}$ maximizes $R_{-\frac{1}{2}}$ over all graphs in $\mathcal{G}_4(n,n)$. Hence, this is also true for the graphs in $\mathcal{G}_3(n,n)$. This follows from  Corollary \ref{cyclemax}. Indeed, as shown in Table \ref{tab:c'ij}, for the $R_{-\frac{1}{2}}$ index, we have 
$$\max\{c'_{12},c'_{13}\}=c'_{12}\simeq -0.068<-c'_{33}\simeq -0.017<0.$$

As a second example, it is shown in  \cite{Zuo2024} that $\C{n}$ minimizes the $ABS$ index in $\mathcal{G}_4(n,n)$.  Hence, this is also true for graphs in $\mathcal{G}_3(n,n)$, and this is a direct consequence of Theorem \ref{cyclemin}. Indeed, for this index, we have 
$$\min\{c'_{12},c'_{13}\}=c'_{12}\simeq 0.073>-c'_{33}\simeq 0.026>0.$$

\subsection{Trees that maximize $R_{-1}$ in $\mathcal{G}_3(n,n-1)$}
\label{sub:appliRautenbach}

The problem of maximizing $R_{-1}$ over all graphs in $\mathcal{G}_3(n,n-1)$ is solved in~\cite{Rautenbach2003}. We can easily prove the same result with our polyhedral approach. Indeed, 
assuming $m=n-1$, we get from Table~\ref{tab:extr_points2} that the points that maximize $\widehat{R}_{-1}$ for $3\leq n\leq 12$ are 
\begin{itemize}
    \item $(2,0,0)=\V{2}$ if $n\in\{3,4,5,6\}$,
    \item $(2,0,0)=\V{2}$ and $(3,0,0)=\V{11c}$ if $n\in\{7, 8,9\}$, and
        \item $(4,0,1)=\V{11c}$ if $n\in\{10,11,12\}$.
\end{itemize}

For $n\geq 13$, we know from Table~\ref{tab:extr_points1} that the only candidate points to maximize $\widehat{R}_{-1}$ for graphs in $\mathcal{G}_3(n,n-1)$  are $\V{2}$, $\V{7a}$, $\V{7b}$, $\V{7c}$, $\V{8c}$, $\V{10a}$, $\V{10b}$, $\V{10c}$, $\V{11a}$, $\V{11b}$, $\V{11c}$ and $\V{12a}$ as well as $\V{8d}$ if $n$ is odd. Clearly, the only candidates V$i$ for which $\widehat{R}_{-1}($V$i)$ can be strictly positive are $\V{7b}$, $\V{11a}$, $\V{11b}$ and $\V{11c}$, and it is easy to observe that
\begin{itemize}[nosep]
    \item $\widehat{R}_{-1}(\V{11a})=\widehat{R}_{-1}(\V{11b})=\widehat{R}_{-1}(\V{11c})$ when $n\mod 3=1$ and both $\widehat{R}_{-1}(\V{11a})$ and $\widehat{R}_{-1}(\V{11b})$ are strictly smaller than $\widehat{R}_{-1}(\V{11c})$ otherwise;
    \item $\widehat{R}_{-1}(\V{7b})=\widehat{R}_{-1}(\V{11c})$ if $n=13$ or 16, and $\widehat{R}_{-1}(\V{7b})<\widehat{R}_{-1}(\V{11c})$ otherwise.
\end{itemize}

Hence, in all cases, $\widehat{R}_{-1}($V$i)$ is maximized with $\V{11c}$ when $n\geq 13$.
In summary, $\V{2}$ maximizes $\widehat{R}_{-1}$ if $n<7$ while $\V{11c}$ maximizes $\widehat{R}_{-1}$ if $n\geq 7$.

The value $R_{-1}(V)$ of the $R_{-1}$ index of a point $V$ corresponding to a chemical tree of order $n$  is $R_{-1}(V)=\widehat{R}_{-1}(V)+C_{R_{-1}}(n,m)$ where
$$C_{R_{-1}}(n,m)=(6n-5m)c_{22}+(6m-6n)c_{23}=\frac{n+5}{4}-1=\frac{n+1}{4}.$$

Since $R_{-1}(\V{2})=\frac{n+1}{4}$ and $$R_{-1}(\V{11c})=\frac{1}{36}\left\lfloor\frac{n-7}{3}\right\rfloor+\frac{n+1}{4}=\frac{7n}{27}+\frac{20-(n+2)\bmod 3}{108},$$
we can conclude that given any chemical tree $T$ in $\mathcal{G}_3(n,n-1)$, as shown in \cite{Rautenbach2003}, we have 

$$
R_{-1}(T) \leq \left\{ 
\begin{array}{ll}
\frac{n+1}{4} & \mbox{if } n\in\{3,4,5,6\},\\
\frac{7n}{27}+\frac{1}{6} & \mbox{if }n\geq 7\mbox{ and }n\equiv 0\bmod 3,\\
\frac{7n}{27}+\frac{5}{27} & \mbox{if }n\geq 7\mbox{ and }n\equiv 1\bmod 3,\\
\frac{7n}{27}+\frac{19}{108} & \mbox{if }n\geq 7\mbox{ and }n\equiv 2\bmod 3.
\end{array}\right.$$

\subsection{More optimal values for the  $R_{-\frac{1}{2}}$ and $ABS$ indices}

Let's consider bicyclic graphs. It is proved in \cite{CAPOROSSI200385} that over all graphs in $\mathcal{G}_4(n,n+1)$, $\V{3}$ maximizes $R_{-\frac{1}{2}}$, and in \cite{Zuo2024} that the same point minimizes the $ABS$ index. Hence, this is also true when the maximum degree is at most 3, and this can be proved with our polyhedral approach. Indeed, 
for bicyclic chemical graphs with at least 11 vertices, Table~\ref{tab:extr_points1} shows  that the candidate points to achieve an optimal value are $\V{1}$, $\V{3}$, $\V{7a}$, $\V{7b}$, $\V{7c}$, $\V{8c}$, $\V{9a}$, $\V{9b}$, $\V{9c}$, $\V{10a}$, $\V{10b}$, $\V{10c}$, $\V{11a}$, $\V{11b}$ and $\V{11c}$ as well as $\V{8d}$ if $n$ is odd. \begin{itemize}
    \item 
It is easy to observe that $\widehat{R}_{-\frac{1}{2}}($V$i)$ is less than or equal to zero for all these points, except for $\V{3}$ for which $\widehat{R}_{-\frac{1}{2}}(\V{3})\simeq 0.017$. For smaller values of $n$, it is easy to check from Table~\ref{tab:extr_points2} that $(0,0,1)$ is always the point with maximum $R_{-\frac{1}{2}}$ index.
    \item 
It is easy to observe that $\widehat{ABS}($V$i)$ is non-negative for all these points, except for $\V{3}$ for which $\widehat{ABS}(\V{3})\simeq -0.026$. For smaller values of $n$, it is easy to check from Table~\ref{tab:extr_points2} that $(0,0,1)$ is always the point with minimum $ABS$  index.\end{itemize}

Let's now consider tricyclic graphs. It is argued in \cite{CAPOROSSI200385} (without proof) that the graph that maximizes $R_{-\frac{1}{2}}$ over all graphs in $\mathcal{G}_4(n,n+2)$ has 4 edges with both endpoints of degree 3 and no vertex of degree 1 or 4. This corresponds to the point $(0,0,4)$ in $\mathcal{G}_3(n,n+2)$. This is however wrong. Indeed, we prove here below that the tricyclic graph $\ensuremath{{\sf T}}_n$ (i.e., $\V{12b}=(0,0,5)$) is the unique graph in $\mathcal{G}_3(n,n+2)$ that maximizes $R_{-\frac{1}{2}}$ in $\mathcal{G}_3(n,n+2)$:
\begin{itemize}
\item For graphs in $\mathcal{G}_3(n,n+2)$ with at least 11 vertices, Table~\ref{tab:extr_points1} shows  that the candidate points to achieve the maximum value are $\V{1}$, $\V{7b}$, $\V{7c}$, $\V{8c}$, $\V{10a}$, $\V{10c}$, $\V{11a}$, $\V{11b}$, $\V{11c}$, $\V{12b}$ and $\V{12c}$, as well as $\V{7a}$ and $\V{10b}$ if $n\notin \{11,12,15\}$ and $\V{8d}$ if $n$ is odd (since $n+1<n+2<\frac{6n}{5}$).  It is easy to observe that $\widehat{R}_{-\frac{1}{2}}($V$i)$ is non-positive for all these points, except $\widehat{R}_{-\frac{1}{2}}(\V{12c})=\widehat{R}_{-\frac{1}{2}}(1,0,7)=c'_{12}+7c'_{33}\simeq 0.05$ and $\widehat{R}_{-\frac{1}{2}}(\V{12b})=\widehat{R}_{-\frac{1}{2}}(0,0,5)\simeq 0.08$.
\item For $n=10$, we have $m=n+2=\frac{6n}{5}$, and Table~\ref{tab:extr_points1} shows that the candidate points are 
\V{6}$, \V{8c}$, $\V{11a}$, $\V{11b}$, $\V{11c}$, $\V{12b}$ and $\V{12c}$, as well as $\V{8d}$ if $n$ is odd. Since $\V{6}=\V{1}=(0,0,0)$, we have observed in the previous case that $\V{12b}=(0,0,5)$ maximizes $R_{-\frac{1}{2}}$.
\item For $5\leq n\leq 9$, it is easy to check from Table \ref{tab:extr_points2} that $(0,0,5)$ is also always the point with maximum $R_{-\frac{1}{2}}$ index. 
\end{itemize}
We therefore conclude that for $n\geq 5$, $\V{12b}$ maximizes the $R_{-\frac{1}{2}}$ index over all tricyclic chemical graphs. \newline

Let's conclude this section with the graphs in $\mathcal{G}_3(n,n+2)$ that minimize the $ABS$ index. It is proved in \cite{Zuo2024} that for a maximum degree at most 4, the optimal tricyclic graph is $\V{12b}$. Hence this is also true when the maximum degree is at most 3 which we now prove with our polyhedral approach. Recall that the candidate points to achieve the minimum value are $\V{1}$, $\V{7b}$, $\V{7c}$, $\V{8c}$, $\V{10a}$, $\V{10c}$, $\V{11a}$, $\V{11b}$, $\V{11c}$, $\V{12b}$ and $\V{12c}$, as well as $\V{7a}$ and $\V{10b}$ if $n\notin \{11,12,15\}$ and $\V{8d}$ if $n$ is odd.
Note that $\widehat{ABS}(\V{12b})\simeq -0.128$ and $\widehat{ABS}(\V{12c})\simeq -0.106$, and it is easy to observe that $\widehat{ABS}($V$i)$ is non-negative for all the other points, except in the following cases:
\begin{itemize}[nosep]
    \item if $n=10$, $\widehat{ABS}(\V{11b}){=}\widehat{ABS}(\V{11c}){=}\widehat{ABS}(2,0,8)=2c'_{12}+8c'_{33}>5c'_{33}$;
    \item if $n=11$, $\widehat{ABS}(\V{7b})=\widehat{ABS}(1,0,3)=c'_{12}+3c'_{33}>5c'_{33}$, $\widehat{ABS}(\V{11a})=\widehat{ABS}(2,0,7)=2c'_{12}+7c'_{33}>5c'_{33}$,  and $\widehat{ABS}(\V{11c})=\widehat{ABS}(2,0,8)=2c'_{12}+8c'_{33}>5c'_{33}$;
    \item if $n=12$, $\widehat{ABS}(\V{11a}){=}\widehat{ABS}(2,0,6)=2c'_{12}+6c'_{33}>5c'_{33}$, and $\widehat{ABS}(\V{11c}){=}\widehat{ABS}(2,0,8) = 2c'_{12}+8c'_{33}>5c'_{33}$;
        \item if $n=13$, $\widehat{ABS}(\V{11a}) {=}\widehat{ABS}(\V{11b}){=}\widehat{ABS}(\V{11c}){=}\widehat{ABS}(3,0,9)=3c'_{12}+9c'_{33}>5c'_{33}$;
    \item if $n=14$ or 15, $\widehat{ABS}(\V{11c}){=}\widehat{ABS}(3,0,9)=3c'_{12}+9c'_{33}>5c'_{33}$.
\end{itemize}
We deduce that for $n\geq 10$, $\V{12b}$ minimizes the $ABS$ index over all graphs in $\mathcal{G}_3(n,n+2)$. 
For $5\leq n\leq 9$, it is easy to check from Table \ref{tab:extr_points2} that $(0,0,5)$ is also always the point with minimum $ABS$ index.

\section{Concluding remarks and future work}\label{sec:futurework}

In this study, we have shown that the polyhedral approach offers a powerful and systematic framework for identifying families of chemical graphs that are extremal with respect to degree-based topological indices. Beyond reproducing many established results from the literature, this method also allows for their verification and, when required, correction, as exemplified by our discovery of a counterexample to a previously reported extremal result involving the Randić index.

We have also introduced ChemicHull, an online tool that enables researchers to easily explore and apply the results presented in this work. The tool offers an intuitive interface for visualizing extremal chemical graphs and analyzing optimal values of degree-based topological indices.

Moreover, our findings indicate that some graphs can never serve as extreme points under any degree-based topological index. Therefore, if a molecule of interest in extremal chemistry is not represented among the extremal points identified in this study, it must instead be described by a topological index beyond the degree-based class.

An important direction for future work is to extend these results to chemical graphs with a maximum degree of at most 4. Preliminary investigations, however, suggest that such an extension will be considerably more complex, both combinatorially and computationally.

\bibliographystyle{acm}

\end{document}